\newtheorem{theorem}{Theorem}
\newtheorem{lemma}[theorem]{Lemma}
\newtheorem{corollary}[theorem]{Corollary}
\newtheorem{assumption}{Assumption}
\numberwithin{theorem}{section}
\numberwithin{equation}{section}
\theoremstyle{definition}
\newtheorem{definition}[theorem]{Definition}
\newtheorem{remark}[theorem]{Remark}
\title[Opinion Dynamics]{Opinion Dynamics Models with Memory in Coopetitive Social Networks: Analysis, Application and Simulation}
\author{Qingsong Liu}
\email{qingsongliu@wust.edu.cn}
\address{Engineering Research Center of
Metallurgical Automation and Measurement Technology, Wuhan University of
Science and Technology, Wuhan 430081, China}
\author{Li Chai}
\email{chaili@wust.edu.cn}
\address{Engineering Research Center of
Metallurgical Automation and Measurement Technology, Wuhan University of
Science and Technology, Wuhan 430081, China}
\begin{document}
	\batchmode
\begin{abstract}
  In some social networks, the opinion forming is
based on its own and neighbors' (initial) opinions, whereas the evolution of the individual opinions is also influenced by the individual's past opinions in the real world. Unlike existing social network models, in this paper, a novel model of
opinion dynamics is proposed, which describes the evolution of the individuals' opinions not only depends on
its own and neighbors' current opinions, but also depends on past opinions. Memory and memoryless communication rules are
simultaneously established for the proposed opinion dynamics model. Sufficient and/or necessary conditions for the equal polarization,
consensus and neutralizability of the opinions are respectively presented
in terms of the network topological structure and the spectral analysis. We apply our model to simulate Kahneman's seminal experiments on choices in risky and riskless contexts, which fits
in with the experiment results. Simulation analysis shows that the memory capacity of the
individuals is inversely proportional to the speeds of the ultimate opinions
formational.
\end{abstract}

	\maketitle

\section{Introduction}

Analysis and control of agent-based network systems have been an active
research topic in the past decades, and the consensus problems have been
extensively studied in the literature (see,
\cite{om04tac,mz10tac,lz20tsmc,zl14auto} and the references therein).
The problem of multi-agent consensus from a graph signal processing
perspective was considered in \cite{ycz20tac}, where analytic solutions were provided for the optimal convergence rate as well as the corresponding control gains. In the theory of agent-based
network systems, the social network is one of the important and interesting
case study, the opinions of the social individuals usually reach
disagreement in the social networks \cite{pptf17tac}.
%However, in the multi-agent systems, each agent
%interact in terms of some local rules in order to achieve some global
%coordinated behavior.
For example, in the
cooperative social networks, the disagreement of the heterogeneous belief
systems was investigated in \cite{ylwa19tac}, where it was revealed that the
disagreement behaviors of opinion dynamics were affected by the logical
interdependence structure \cite{ylwa19tac}, and for the opinion dynamics in
the antagonistic social networks, the disagreement problem under the
leader-follower hierarchical framework was studied in \cite{mmh19siam}.
In the opinion dynamics model with biased assimilation,
how individual biases influence social equilibria was reported in \cite{cqlqbs19auto}.
In the nonlinear opinion dynamics model, sufficient conditions guaranteeing asymptotic convergence of opinions were provided in \cite{am19auto}.

In the past few years, there has been an increasing interest in the study of
the opinion dynamics, and many results have been reported in the literature to
analyze the individuals' opinions on a topic evolve over time as they interact
\cite{pt17arc,yqaac19auto}. For instance, the opinion dynamics model with
bounded confidence was investigated in \cite{hk02jass}, which is called
Hegselmann-Krause (H-K) model, the consensus and polarization problem were also
addressed in \cite{hk02jass}, the extension opinion dynamics of the
H-K model with decaying confidence was considered in
\cite{mg10tac}, the evolution of opinions on a sequence of
issues was studied in \cite{jtfb15siam}, and the quasi-consensus behavior of H-K opinion dynamics was analyzed in \cite{sch17auto}. Some other well-known opinion
dynamics model (for example, DeGroot model \cite{d74jasa} and Friedkin-Johnsen (F-J)
model \cite{fj99agp}) were also studied in the literature, such as, in
strongly connected networks, a nonlinear opinion dynamics model was analyzed
in \cite{xylcs20auto}. It was shown that the stability of certain equilibria
subject to both the degree of bias and the neighbors' topology
\cite{xylcs20auto}. The multidimensional F-J model was addressed
by \cite{pptf17tac} and F-J model over issue sequences with
bounded confidence was studied in \cite{tw18auto}. To describe the stochastic evolution of opinion dynamics, a novel opinion dynamics model was proposed in \cite{bcn19auto}.

All the references mentioned above mainly take into account the evolution of
opinions in the cooperative networks. However, in some real world scenarios or
social networks, it is reasonable to assume that some individuals cooperate
with each other, the other individuals compete with each other, wherein the
positive weights among individuals implies cooperation and the negative
weights among individuals means competition, which can be described as signed
graphs. Recently, the signed graph was firstly applied to the social networks
by Altafini \cite{altafini12,altafini13tac}, it was shown that each agent can
be asymptotically converged to a value that equal size but opposite sign in
the structurally balanced networks. Afterwards, social networks with
competitive interactions have attracted extensive attentions, such as, by
using the Perron-Frobenius theorem to predict the outcomes of the opinion
forming process, which was applied to the PageRank with negative links
\cite{ag15tac}, in the time-varying network topology, the consensus and
polarization of the continuous-time opinion dynamics with hostile camps was
studied in \cite{pmc16tac}, and the opinion-forming process on the coopetitive
social networks was considered in \cite{xhc20tnse}, where the problem of how the
mass media formulates and changes public opinions was solved in
\cite{xhc20tnse}.

Notice that the literature \cite{bq15sn} reports an interesting discovery on
how human memory is encoded in social networks and the
capacity of memory mentioned in \cite{sd07sn}.
Furthermore, \cite{kahneman03ap} shows that people makes judgments and choices influenced by individual's memory.
In order to characterize the effect of human memory in the social networks, we propose a
new model of opinion dynamics on coopetitive (cooperative and competitive)
social networks, which describes the evolution of the individuals'
opinions not only depending on its own and neighbors' current opinions, but also
depending on its own and neighbors' past opinions. Memory and memoryless
communication rules are respectively established for the proposed opinion
dynamics model in the coopetitive social networks, and moreover, sufficient
and/or necessary conditions for the equal polarization, consensus and
neutralizability of the opinions are proposed on the basis of the network
topological structure and the spectral analysis. We apply the proposed model to Kahneman's seminal experiments on choices in risky and riskless contexts, which are parts of Kahneman's work in winning the Nobel Prize. We show that the model can reproduce the results of Kahneman experiments. To the best of our knowledge, few mathematical model regenerates Kahneman's experiments. According to the simulation
analysis, we find that the memory capacity of the individuals is inversely
proportional to the speeds of the ultimate opinions formation.

This paper is organized as follows. Some preliminaries and
the problem formulation are given in Section \ref{sec2}. Memory communication rules and memoryless
communication rules are established respectively in Section \ref{sec3} and Section \ref{sec4}, and sufficient and/or necessary conditions guaranteeing the equal polarization, consensus and neutralizability
of the opinions are presented. In Section \ref{sec5}, Kahneman's seminal experiments on choices in risky and riskless contexts is
examined by using our proposed opinion dynamics model. In Section \ref{sec6}, two numerical examples are
carried out to analyse the influences of the susceptibility coefficient and the memory capacity of the individuals. Section \ref{sec7} concludes
this paper.
\section{Preliminaries and Problem Formulation}\label{sec2}

%\subsection{Preliminaries}\label{sec2.1}

Before formulating the problem of the present paper, we
first give some basic concepts and properties of signed graphs. Let
$\mathcal{G}=\left(  \mathcal{V},\mathcal{E},\mathcal{W}\right)  $ be a
weighted signed directed graph, where $\mathcal{V}=\{1,2,\cdots,N\}$ is the
set of vertices, $\mathcal{E}\subseteq$ $\mathcal{V}\times \mathcal{V}$ is the
set of edges and $\mathcal{W}=[w_{ij}]\in \mathbf{R}^{N\times N}$ is the
weighted adjacency matrix with elements $w_{ij},i,j\in \mathcal{V}$. In
addition, $w_{ij}>0$ means that individual $j$ is cooperative to individual
$i$, and $w_{ij}<0$ means that individual $j$ is competitive to individual
$i$. Similarly to \cite{altafini13tac}, we assume that $w_{ii}=0$ and
$w_{ij}w_{ji}\geq0$ for all $i,j\in \mathcal{V}$, which is called digon sign-symmetry. The Laplacian matrix
of $\mathcal{G}$ is defined as $L=\left[  l_{ij}\right]  \in \mathbf{R}^{N\times N}$, where $l_{ii}=\sum _{j=1}^{N}\left \vert w_{ij}\right \vert$, $l_{ij}=-w_{ij},i\neq j $.
Specially, the unsigned graph of the signed graph $\mathcal{G}=\left(
\mathcal{V},\mathcal{E},\mathcal{W}\right)  $ is expressed as $\overline
{\mathcal{G}}=\left(  \mathcal{V},\mathcal{E},\overline{\mathcal{W}}\right)
$, where the weighted adjacency matrix $\overline{\mathcal{W}}=[\overline
{w}_{ij}]\in \mathbf{R}^{N\times N}$ with nonnegative elements $\overline
{w}_{ij}=\left \vert w_{ij}\right \vert ,i,j\in \mathcal{V}$, and the Laplacian
matrix of $\overline{\mathcal{G}}$ is defined as $\overline{L}=\left[
\overline{l}_{ij}\right]  \in \mathbf{R}^{N\times N}$, in which,
$\overline{l}_{ii}=\sum_{j=1}^{N}\overline{w}_{ij}$,
$\overline{l}_{ij}=-\overline{w}_{ij}$, $i\neq j$. The digraph is quasi-strongly connected if it has
at least one root node, where root node has no parent and which has directed paths to all other nodes \cite{lxhb19tac,ljw19tac}. A subgraph is in-isolated if
there is no edge coming from outside to itself.

\begin{definition}
\label{definition1} A signed graph $\mathcal{G}=\left(  \mathcal{V}%
,\mathcal{E},\mathcal{W}\right)  $ is structurally balanced if the node set
can be split into two disjoint subsets $\mathcal{V}^{+}$ and $\mathcal{V}^{-}$
with the property that $\mathcal{V}^{+}\cup \mathcal{V}^{-}=\mathcal{V}$ and
$\mathcal{V}^{+}\cap \mathcal{V}^{-}=\varnothing$, $w_{ij}>0$, if
$i\in \mathcal{V}^{+},j\in \mathcal{V}^{+}$ (or $i\in \mathcal{V}^{-}%
,j\in \mathcal{V}^{-}$) and $w_{ij}<0$, if $i\in \mathcal{V}^{+},j\in
\mathcal{V}^{-}$(or $i\in \mathcal{V}^{-},j\in \mathcal{V}^{+}$).
\end{definition}

\begin{lemma}
\label{lemma1}\cite{altafini13tac} The signed graph $\mathcal{G}$ is
structurally balanced if and only if there exists a diagonal matrix
$D=\mathrm{diag}\left(  d_{1},d_{2},\cdots,d_{N}\right)  $, $d_{i}=\pm
1,i\in \mathcal{V}$, such that $\overline{\mathcal{W}}=[\overline{w}%
_{ij}]=D\mathcal{W}D$ is nonnegative matrix.
\end{lemma}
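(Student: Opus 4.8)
The plan is to prove the two directions separately, constructing the gauge matrix $D$ explicitly from a bipartition in one direction and recovering the bipartition from the signs of the diagonal entries of $D$ in the other. For the forward direction, suppose $\mathcal{G}$ is structurally balanced, so by Definition \ref{definition1} we have a partition $\mathcal{V}=\mathcal{V}^{+}\cup\mathcal{V}^{-}$ with the stated sign pattern on the $w_{ij}$. I would define $D=\mathrm{diag}(d_1,\dots,d_N)$ by setting $d_i=+1$ if $i\in\mathcal{V}^{+}$ and $d_i=-1$ if $i\in\mathcal{V}^{-}$. Then for the $(i,j)$ entry of $D\mathcal{W}D$ one computes $(D\mathcal{W}D)_{ij}=d_i w_{ij} d_j$, and a short case check on whether $i,j$ lie in the same class or different classes shows that $d_i d_j=+1$ exactly when $w_{ij}\ge 0$ and $d_i d_j=-1$ exactly when $w_{ij}\le 0$ (using $w_{ii}=0$ for the diagonal), so every entry of $D\mathcal{W}D$ is nonnegative; hence $\overline{\mathcal{W}}=D\mathcal{W}D$ is the required nonnegative matrix. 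Note that consistency of this construction is automatic from the definition, since the definition already prescribes the sign of $w_{ij}$ in terms of class membership.

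For the converse, assume there is a diagonal $D$ with $d_i=\pm 1$ such that $D\mathcal{W}D$ is nonnegative. I would set $\mathcal{V}^{+}=\{i\in\mathcal{V}: d_i=+1\}$ and $\mathcal{V}^{-}=\{i\in\mathcal{V}: d_i=-1\}$; these are clearly disjoint with union $\mathcal{V}$. For any edge, $(D\mathcal{W}D)_{ij}=d_i w_{ij} d_j\ge 0$ forces $\operatorname{sign}(w_{ij})=\operatorname{sign}(d_i d_j)$ whenever $w_{ij}\ne 0$. If $i,j$ are in the same class then $d_i d_j=+1$, so $w_{ij}>0$; if they are in different classes then $d_i d_j=-1$, so $w_{ij}<0$. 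This is exactly the sign condition in Definition \ref{definition1}, so $\mathcal{G}$ is structurally balanced. It is worth remarking that the digon sign-symmetry assumption $w_{ij}w_{ji}\ge 0$ guarantees the condition can be stated symmetrically in $i$ and $j$ and that no edge gives contradictory constraints on the two endpoints.

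The argument is essentially a bookkeeping exercise, so there is no serious obstacle; the only point requiring a little care is the bookkeeping around zero entries (edges that are absent, and the forced diagonal zeros $w_{ii}=0$), which I would handle by phrasing the sign condition as "$w_{ij}\ge 0$ iff $d_i d_j=1$" rather than with strict inequalities, and by noting that absent edges ($w_{ij}=0$) impose no constraint and are consistent with either sign of $d_i d_j$. Everything else is the two-line matrix identity $(D\mathcal{W}D)_{ij}=d_i w_{ij} d_j$ together with the elementary fact that a product of two values in $\{+1,-1\}$ is $+1$ precisely when the factors agree. Since this is the statement of Lemma \ref{lemma1}, cited from \cite{altafini13tac}, I would keep the write-up short and refer to that source for the original treatment.
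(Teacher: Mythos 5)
Your argument is correct and is the standard one: the paper itself gives no proof of Lemma \ref{lemma1}, citing \cite{altafini13tac}, and your gauge-matrix construction ($d_i=+1$ on $\mathcal{V}^{+}$, $d_i=-1$ on $\mathcal{V}^{-}$, together with the identity $(D\mathcal{W}D)_{ij}=d_iw_{ij}d_j$) is exactly how the equivalence is established in that reference. Your handling of the zero entries (absent edges and the diagonal $w_{ii}=0$) is the right way to read the strict inequalities in Definition \ref{definition1}, so nothing further is needed.
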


If the unsigned graph $\overline{\mathcal{G}}$ is quasi-strongly connected
(has oriented spanning tree), then there exists a nonsingular matrix
$U\in \mathbf{R}^{N\times N}$, whose first column is $\mathbf{1}_{N}%
\triangleq \left[  1,1,\ldots,1\right]  ^{\mathrm{T}}\in \mathbf{R}^{N}$, such
that \cite{rc11book}%
\begin{equation}
U^{-1}\overline{L}U=\left[
\begin{array}
[c]{ccccc}%
0 &  &  &  & \\
& \lambda_{2} & \epsilon_{2} &  & \\
&  & \ddots & \ddots & \\
&  &  & \lambda_{N-1} & \epsilon_{N-1}\\
&  &  &  & \lambda_{N}%
\end{array}
\right]  \triangleq \left[
\begin{array}
[c]{cc}%
0 & 0\\
0 & J
\end{array}
\right]
\triangleq J_{\overline{L}}, \label{j}%
\end{equation}
where $\lambda_{i},i\in \mathbf{I}[2,N]\triangleq \{2,3,\ldots,N\}$ are the
eigenvalues of the Laplacian matrix $L$, and $\epsilon_{i}\in \{0,1\},i\in
\mathbf{I}[2,N-1]$. Let $\epsilon_{N}=0$. $\overline{\mathcal{G}}$ is
quasi-strongly connected implying that $\operatorname{Re}\{ \lambda
_{i}\}>0,i\in \mathbf{I}[2,N]$.
\begin{figure}
    \centering
    \includegraphics[scale=0.5]{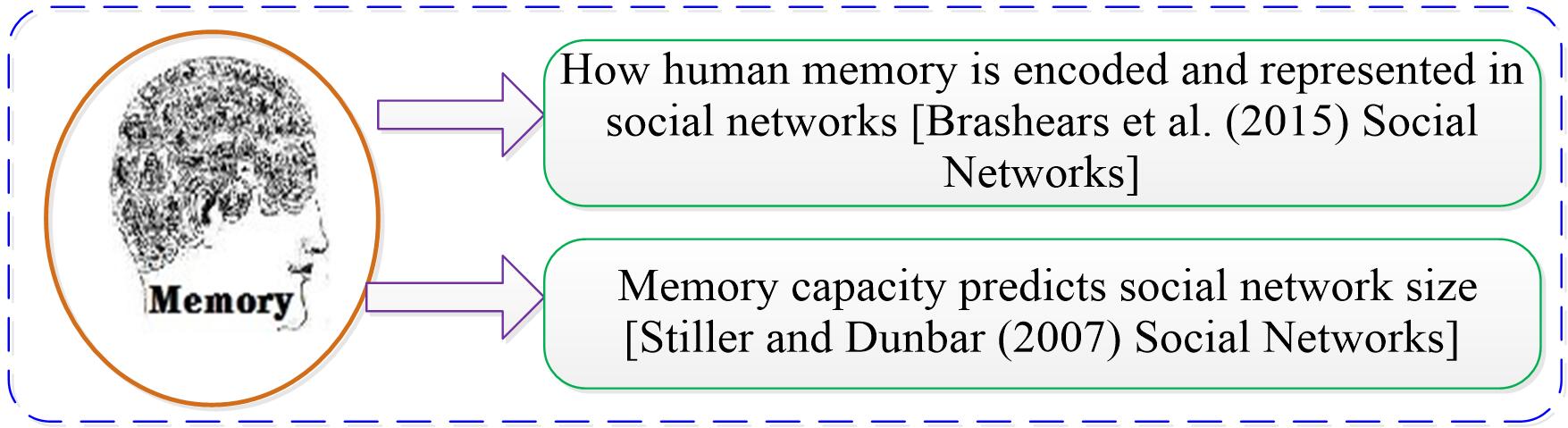}
    \caption{The main motivation of the present paper.}
    \label{fig1}
\end{figure}
\begin{lemma}
\label{lemmaadd}(Jensen Inequality \cite{gkc03book})For $P\geq0$, two scalars $\gamma_1$ and $\gamma_2$
with $\gamma_1\leq\gamma_2$, and a vector valued function $\omega:\left[  \gamma_{1},\gamma_{2}\right]  \rightarrow
\mathbf{R}^{n}$ such that the integrals in the following are well defined, then
\begin{align*}
\left(  \int_{\gamma_{1}}^{\gamma_{2}}\omega^{\mathrm{T}}\left(
\beta \right)  \mathrm{d}\beta \right)  P\left(  \int_{\gamma_{1}}^{\gamma_{2}%
}\omega \left(  \beta \right)  \mathrm{d}\beta \right)
\leq  \left(  \gamma_{2}-\gamma_{1}\right)  \left(  \int_{\gamma_{1}}%
^{\gamma_{2}}\omega^{\mathrm{T}}\left(  \beta \right)  P\omega \left(
\beta \right)  \mathrm{d}\beta \right)  .
\end{align*}
\end{lemma}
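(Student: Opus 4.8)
The plan is to reduce the vector statement to the classical Cauchy--Schwarz inequality in the inner product induced by the positive semidefinite matrix $P$. First I would introduce the bilinear form $\langle x,y\rangle_P \triangleq x^{\mathrm{T}}Py$ on $\mathbf{R}^n$, which is symmetric and positive semidefinite because $P\geq 0$; equivalently, write $P = R^{\mathrm{T}}R$ via a Cholesky-type or square-root factorization, so that $\langle x,y\rangle_P = (Rx)^{\mathrm{T}}(Ry)$ is a genuine Euclidean inner product of the transformed vectors. Setting $v(\beta) \triangleq R\,\omega(\beta)$, the claimed inequality becomes
\begin{align*}
\left\| \int_{\gamma_1}^{\gamma_2} v(\beta)\,\mathrm{d}\beta \right\|^2
\leq (\gamma_2-\gamma_1)\int_{\gamma_1}^{\gamma_2} \|v(\beta)\|^2\,\mathrm{d}\beta,
\end{align*}
i.e.\ an integral Cauchy--Schwarz estimate for the constant weight $1$ on the interval $[\gamma_1,\gamma_2]$.

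Next I would establish this scalar-valued integral inequality directly. One clean route is to apply the Cauchy--Schwarz inequality componentwise: for each coordinate index $k$, $\bigl(\int_{\gamma_1}^{\gamma_2} v_k(\beta)\,\mathrm{d}\beta\bigr)^2 = \bigl(\int_{\gamma_1}^{\gamma_2} 1\cdot v_k(\beta)\,\mathrm{d}\beta\bigr)^2 \leq \bigl(\int_{\gamma_1}^{\gamma_2} 1\,\mathrm{d}\beta\bigr)\bigl(\int_{\gamma_1}^{\gamma_2} v_k(\beta)^2\,\mathrm{d}\beta\bigr) = (\gamma_2-\gamma_1)\int_{\gamma_1}^{\gamma_2} v_k(\beta)^2\,\mathrm{d}\beta$, and then sum over $k$. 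Alternatively, and perhaps more in the spirit of LMI-based arguments used later in the paper, one can observe that for every fixed $\beta$ the block matrix $\begin{bmatrix} (\gamma_2-\gamma_1) & \omega^{\mathrm{T}}(\beta)\\ \omega(\beta) & \text{(something)}\end{bmatrix}$-type Schur complement argument applies; more concretely, for any $\beta$ one has the pointwise inequality $\begin{bmatrix} \omega(\beta)\\ \xi \end{bmatrix}$ paired against a suitable semidefinite kernel, and integrating preserves semidefiniteness. Either way, undoing the substitution $v = R\omega$ recovers exactly the stated inequality, since $\|v(\beta)\|^2 = \omega^{\mathrm{T}}(\beta)P\omega(\beta)$ and $\bigl\|\int v\bigr\|^2 = \bigl(\int\omega^{\mathrm{T}}\bigr)P\bigl(\int\omega\bigr)$.

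The only genuinely delicate point is the regularity hypothesis: the statement presupposes that all the integrals appearing are well defined, so I would simply invoke that assumption to justify interchanging the finite sum over coordinates with the integral and to guarantee that $\int_{\gamma_1}^{\gamma_2}\omega^{\mathrm{T}}(\beta)P\omega(\beta)\,\mathrm{d}\beta$ is finite; measurability and integrability of $\omega$ make $v=R\omega$ integrable as well, since $R$ is a fixed matrix. I do not expect any serious obstacle here — this is the standard integral form of Jensen's (equivalently Cauchy--Schwarz's) inequality, and the matrix weight $P\geq 0$ is handled purely by the factorization $P=R^{\mathrm{T}}R$. The one thing to be careful about in writing it up is to not claim $P$ is invertible (it need only be positive semidefinite), which is why the square-root/factorization formulation is preferable to one that divides by $P$.
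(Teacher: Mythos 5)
Your proof is correct. Note, however, that the paper itself gives no proof of this lemma --- it is quoted verbatim from the reference \cite{gkc03book} --- so there is nothing in the paper to compare against line by line. Your primary route (factor $P=R^{\mathrm{T}}R$, set $v=R\omega$, and apply the scalar Cauchy--Schwarz inequality $\bigl(\int_{\gamma_1}^{\gamma_2} 1\cdot v_k\bigr)^2\leq(\gamma_2-\gamma_1)\int_{\gamma_1}^{\gamma_2} v_k^2$ componentwise before summing) is complete, handles the merely semidefinite case correctly, and is essentially the textbook argument; the Schur-complement variant you sketch is the one the cited reference actually uses, but as you left the lower-right block as ``(something)'' and that route is cleanest when $P>0$, you were right to lead with the factorization argument. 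No gaps.
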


Motivated by the problem that how human memory is encoded and represented in social networks
\cite{bq15sn} and \cite{kahneman03ap} shows that people makes judgments and choices influenced by individual's memory (see Fig. \ref{fig1} for brevity). In this paper, to characterize
the effect of human memory in the social networks $\mathcal{G}$, we propose a
new opinion dynamics model as follows,%
\begin{equation}
\dot{x}_{i}\left(  t\right)  =Ax_{i}\left(  t\right)  +\int_{-h}^{0}\sigma
\left(  \theta \right) \left(I_n-A\right)u_{i}\left(  t+\theta \right)  \mathrm{d}%
\theta, \label{sys0}%
\end{equation}
where $x_{i}\left(  t\right)  \in \mathbf{R}^{n}$, $i\in \mathbf{I}[1,N]$, denotes the opinion of
individual $i$ on some topics, $u_{i}\left(  t\right)  \in \mathbf{R}^{n}$
is the communication rule of individual $i$, $h>0$ denotes the memory
capacity of the individuals, and $\int_{-h}^{0}\sigma\left(  \theta \right)  \mathrm{d}%
\theta=1$. The matrix
$A=[A_{ij}]\in \mathbf{R}^{n\times n}$ describes the ability of cognizance on their
topic opinions, which is formed by some endogenous factors and exogenous
conditions, for instance, personal intelligence, education level and the
social experience. For example, $A_{12}=1$ means that individual
$i$ has very strong cognitive ability on topic 2, and $A_{13}=-1$ implies that
individual $i$ has very weak cognitive ability on topic 3. In order to make the ideas in the present
paper easy to follow, we consider the following simplified model of (\ref{sys0}), $i\in \mathbf{I}[1,N]$,
\begin{equation}
\dot{x}_{i}\left(  t\right)  =Ax_{i}\left(  t\right)  +\sigma\Lambda u_{i}\left(
t\right)  +(1-\sigma)\Lambda u_{i}\left(  t-h\right),
\label{sys}%
\end{equation}
which is seen as the opinion update mechanism with memory, where $\Lambda=I_{n}-A$. We consider both types of rules, one with memory and the other without memory. The communication rule with memory is given by
\begin{align}
u_{i}\left(  t\right)  =&\digamma \sum\limits_{j=1}^{N}\left \vert w_{ij}%
\right \vert \left( \mathrm{sgn}\left(  w_{ij}\right)
x_{j}\left(  t\right) -  x_{i}\left(  t\right)\right)\nonumber \\ &+\mathcal{H}\left( v_{i}\left(
t-\theta \right)  \right)  ,\theta \geq0, \label{pro1}%
\end{align}
which is seen as the communication rule (process) involved the memory capacity, where $v_{i}(t)$ is defined later in (\ref{v1}) and the memoryless communication rule of individual $i$ is given by%
\begin{equation}
u_{i}\left(  t\right)  =\digamma \sum\limits_{j=1}^{N}\left \vert w_{ij}%
\right \vert \left(  \mathrm{sgn}\left(  w_{ij}\right)
x_{j}\left(  t\right)  - x_{i}\left(  t\right) \right)  , \label{pro2}%
\end{equation}in which, $\digamma \in \mathbf{R}^{n\times n}$ is an `opinion adjustment
matrix', $\mathcal{H}$ is a linear map and $\mathrm{sgn}\left(  \cdot \right)$ denotes the signum function. From the opinion dynamics point of view model (\ref{sys}) can be explained as follows. The matrix $\sigma\Lambda\in \mathbf{R}^{n\times n}$ denotes
the susceptibility of individual $i$ to the current interpersonal influence
and the matrix $(1-\sigma)\Lambda\in \mathbf{R}^{n\times n}$ denotes the
susceptibility of individual $i$ to the past interpersonal influence, and the
susceptibility coefficient $\sigma$ satisfies $0\leq \sigma\leq1$. We mention that $\Lambda=I_{n}-A$ is to be such that the
opinion of individual $i$ on topic $p$ is in accord with his initial
opinion on topic $p$ before the individual communicates with his neighbors. In this paper, we let $x_{ip}\left(  t\right)
,p\in \mathbf{I}[1,n]$ denote the opinion on topic $p$ of individual $i$.
Specially, if $x_{ip}>0$, we say that individual $i$ support topic $p$, if
$x_{ip}<0$, we say that individual $i$ reject topic $p$, and we say that
individual $i$ remain neutral on topic $p$ if $x_{ip}=0$. In this paper, we not only consider the opinion forming based on the update rule consisting of (\ref{sys}) and (\ref{pro1}), but also consider the opinion forming based on the update rule consisting of (\ref{sys}) and (\ref{pro2}).

Notice that, a
$q$-voter model with memory was studied in \cite{js18pa}, where the social network
represented by a complete graph and the model agents are described by a single
binary variable. However, in this paper, the social network represented by a
signed graph and the model agents are described by the multi-variable. A
stochastic model and a hybrid model in opinion dynamics with collective memory
were studied in \cite{bck20pa} and \cite{mmpz20csl}, respectively. Nevertheless, the memory that
appears in the models, being a memory of past relations (network connections) \cite{bck20pa,mmpz20csl}, which is
completely different from that the memory of past opinions in the present
paper.
%In this paper, we are interested in analyzing the effects of human memory on the evolution
%of opinions in social networks by our proposed model, and applying our opinion dynamics model to simulate
%Kahneman¡¯s seminal experiments on choices in risky and riskless contexts.
\begin{definition}
\label{definition2}The opinions in the social networks are equal
polarization if there exists $x^{\ast}\left(  t\right)  $, such that
$\lim_{t\rightarrow \infty}x_{i}\left(  t\right)  $ $=x^{\ast}\left(  t\right)
,i\in \mathcal{V}^{+}$ and $\lim_{t\rightarrow \infty}x_{j}\left(  t\right)
=-x^{\ast}\left(  t\right)  ,j\in \mathcal{V}^{-}$, and moreover, if
$\mathcal{V}^{+}=\varnothing$\ or $\mathcal{V}^{-}=\varnothing$, then equal
polarization reduces to consensus. Specially, the opinions are neutralizable
if $\lim_{t\rightarrow \infty}x_{i}\left(  t\right)  =0,i\in \mathcal{V}.$
\end{definition}
\begin{remark}
\label{remark1}Notice
that the opinion is nonstationary equal polarization/consensus in Definition
\ref{definition2}. However, the
opinion is stationary equal polarization/consensus when $x^{\ast}\left(
t\right)  =c$, where $c$ is a constant, which is a special case of Definition
\ref{definition2}, and also is the case of Definition 1 in \cite{altafini13tac}.
\end{remark}

Before closing this subsection, we give the following lemma to show the
properties between $L$ and $\overline{L}$, whose proof can be obtained with the help of \cite{zc17ijrnc}.

\begin{lemma}
\label{lemma2}The eigenvalues of Laplacian matrix $L$ is the same as that of
$\overline{L}$ if the signed graph $\mathcal{G}$ is structurally balanced.
Moreover, the Jordan canonical form of Laplacian matrix $L$ is the same as
(\ref{j}) if the structurally balanced signed graph $\mathcal{G}$ is
quasi-strongly connected.
\end{lemma}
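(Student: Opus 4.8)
The plan is to exhibit an explicit similarity between $L$ and $\overline{L}$ and then read off both conclusions. First I would invoke Lemma \ref{lemma1}: since $\mathcal{G}$ is structurally balanced, there is a diagonal matrix $D=\mathrm{diag}(d_1,\dots,d_N)$ with $d_i=\pm1$ such that $\overline{\mathcal{W}}=D\mathcal{W}D$ has nonnegative entries, that is, $d_id_jw_{ij}=|w_{ij}|=\overline{w}_{ij}$ for all $i,j\in\mathcal{V}$. Note $D^{-1}=D$.

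The key step is the entrywise identity $\overline{L}=DLD$. On the diagonal, $(DLD)_{ii}=d_i^2 l_{ii}=l_{ii}=\sum_{j=1}^N|w_{ij}|=\overline{l}_{ii}$. Off the diagonal, $(DLD)_{ij}=d_id_jl_{ij}=-d_id_jw_{ij}=-|w_{ij}|=-\overline{w}_{ij}=\overline{l}_{ij}$, where the third equality uses the consequence of Lemma \ref{lemma1} recorded above. Hence $L=D\overline{L}D$ and the two Laplacians are similar; in particular they share the same characteristic polynomial, the same eigenvalues and the same Jordan canonical form. This yields the first assertion.

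For the second assertion, I would observe that $\mathcal{G}$ and $\overline{\mathcal{G}}$ have the same vertex set and the same edge set $\mathcal{E}$ (only the signs of the weights differ), so $\overline{\mathcal{G}}$ inherits quasi-strong connectivity from $\mathcal{G}$. Therefore (\ref{j}) applies to $\overline{L}$: there is a nonsingular $U$ with first column $\mathbf{1}_N$ such that $U^{-1}\overline{L}U=J_{\overline{L}}$. Combining this with $L=D\overline{L}D$ and $D^{-1}=D$ gives $(DU)^{-1}L(DU)=U^{-1}(DLD)U=U^{-1}\overline{L}U=J_{\overline{L}}$, so the Jordan canonical form of $L$ coincides with (\ref{j}); the associated transformation $DU$ has first column $(d_1,\dots,d_N)^{\mathrm{T}}$, the Altafini gauge vector.

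There is essentially no serious obstacle here: the whole argument rests on the elementary point that structural balance supplies the sign-change matrix $D$ conjugating $L$ into $\overline{L}$. The only things that need care are the direct verification of $\overline{L}=DLD$ from the entrywise definitions of the two Laplacians, and the remark that quasi-strong connectivity is a property of the (unsigned) edge set alone, so it transfers from $\mathcal{G}$ to $\overline{\mathcal{G}}$, which is what licenses the use of (\ref{j}).
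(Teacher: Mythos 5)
Your proof is correct and is precisely the gauge-transformation argument the paper delegates to its citation (the paper gives no proof of Lemma \ref{lemma2} itself, only a pointer to a reference); the identity $\overline{L}=DLD$ with $D^{-1}=D$ from Lemma \ref{lemma1}, and the resulting similarity $(DU)^{-1}L(DU)=J_{\overline{L}}$, are exactly what the paper's Appendix implicitly relies on when it transforms coordinates by $(DU)^{-1}\otimes I_n$. No gaps.
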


\section{Communication Rules with Memory}\label{sec3}

The main purpose of this paper is to analyze the evolution of opinions in social networks by our proposed model and reproduce the results of
Kahneman experiments. In order to clearly present the theoretical results and interpretations as social phenomena,
all proofs are moved to the Appendix.%\subsection{\label{sec3.1}Memory Communication Rules} the effects of human memory on

In what follows, we will establish the memory communication rule for the
opinion dynamics model (\ref{sys}) and analyze the evolution of opinions in
the social networks. Define a new state vector as%
\begin{equation}
z_{i}\left(  t\right)  =x_{i}\left(  t\right)  +(1-\sigma)\int_{t-h}%
^{t}\mathrm{e}^{A\left(  t-h-s\right)  }\Lambda u_{i}\left(  s\right)  \mathrm{d}%
s, \label{z}%
\end{equation}
where $i\in \mathbf{I}[1,N]$. It then follows from (\ref{sys}) and (\ref{z}) that%
\[
\dot{z}_{i}\left(  t\right) =Az_{i}\left(  t\right)  +\mathcal{A}u_{i}\left(
t\right)  ,i\in \mathbf{I}[1,N],
\]
where
\begin{equation}
\mathcal{A}=\sigma\Lambda+\left(  1-\sigma\right)  \mathrm{e}^{-Ah}\Lambda.
\label{b}%
\end{equation}
Then the opinion communication of each individual obeys the following rules,%
\begin{align}
u_{i}\left(  t\right)  = &  F\sum_{j\in \mathcal{N}_{i}}\left \vert
w_{ij}\right \vert \left(  \mathrm{sgn}\left(  w_{ij}\right)  z_{j}\left(
t\right)  -z_{i}\left(  t\right)  \right)\label{u1}%  \nonumber \\
%=& F\sum_{j\in \mathcal{N}_{i}}\left \vert w_{ij}\right \vert \left(
%\mathrm{sgn}\left(  w_{ij}\right)  \bigg (x_{j}\left(  t\right)  \right.
%\nonumber \\
%&  +(1-\sigma)\int_{t-h}^{t}\mathrm{e}^{A\left(  t-h-s\right)  }\Lambda
%u_{j}\left(  s\right)  \mathrm{d}s\bigg) -x_{i}\left(  t\right) \nonumber \\
%&  \left. -(1-\sigma)\int_{t-h}^{t}%
%\mathrm{e}^{A\left(  t-h-s\right)  }\Lambda u_{i}\left(  s\right)
%\mathrm{d}s\right)  ,%
\end{align}where $i\in \mathbf{I}[1,N]$ and $F$ is referred to as an `opinion adjustment matrix'. It yields from
(\ref{u1}) that%
\begin{equation}
u_{i}\left(  t\right)  =F\left(  \chi_{i}\left(  t\right)  +(1-\sigma)\int_{t-h}^{t}\mathrm{e}^{A\left(  t-h-s\right)  }\Lambda v_{i}\left(  s\right)
\mathrm{d}s\right)  , \label{u2}%
\end{equation}
where $i\in \mathbf{I}[1,N]$,
\begin{align}
\chi_{i}\left(  t\right)  =\sum\limits_{j=1}^{N}\left \vert w_{ij}%
\right \vert \left( \mathrm{sgn}\left(  w_{ij}\right)
x_{j}\left(  t\right) -x_{i}\left(  t\right)  \right)  =-\sum\limits_{j=1}^{N}l_{ij}x_{j}\left(
t\right)  , \label{x1}%
\end{align}
and%
\begin{align}
v_{i}\left(  t\right)  =\sum\limits_{j=1}^{N}\left \vert w_{ij}\right \vert
\left( \mathrm{sgn}\left(  w_{ij}\right)
u_{j}\left(  t\right) -u_{i}\left(  t\right)  \right)  =-\sum\limits_{j=1}^{N}l_{ij}u_{j}\left(
t\right) . \label{v1}%
\end{align}
%Furthermore, we note that%
%\begin{equation}
%\dot{\chi}_{i}\left(  t\right)  =A\chi_{i}\left(  t\right)  +\sigma_{i}%
%Bv_{i}\left(  t\right)  +(1-\sigma_{i})Bv_{i}\left(  t-h\right)
%,i\in \mathbf{I}[1,N]. \label{sys2}%
%\end{equation}
For simplicity, we define the set of matrices as
\begin{align*}
\mathcal{S} = &\left \{  S:\lambda(S)\subseteq \mathbf{C}_{\leq0},s\text{ is
semi-simple,}\right.  \\
& \left.  \forall s\in \lambda(S)\cap \iota \mathbf{R\neq \varnothing}\right \}
\mathbf{,}%
\end{align*}
where $\lambda(S)$ denotes the spectrum of a square matrix $S$, $\mathbf{C}%
_{\leq0}$ denotes the set of negative complex numbers and 0, a semi-simple eigenvalue possesses equal algebraic and geometric multiplicities,
$\iota^{2}=-1$ and $\iota\mathbf{R}$ denotes the imaginary axis.
Now, we have the following new theorem.

\begin{theorem}
\label{theorem1}Consider the opinion dynamic model (\ref{sys}) in the coopetitive social network
$\mathcal{G}$. The individuals' opinions are equal polarization under the communication rule (\ref{u1})
if the following conditions are satisfied: $1)$ $\mathcal{G}$ is structurally
balanced and quasi-strongly connected; $2)$ $A\in \mathcal{S}$; $3)$ There exists an `opinion adjustment
matrix' $F$ such that
$A-\lambda_{i}\mathcal{A}F,i\in \mathbf{I}[2,N]$ are Hurwitz.
\end{theorem}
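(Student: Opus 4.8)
The plan is to convert the memory model into a finite-dimensional linear ODE, triangularize it using the graph's gauge and spectral structure, and then read off polarization from the surviving mode. First I would substitute the communication rule (\ref{u1}), which can be written as $u_i(t)=-F\sum_{j=1}^{N}l_{ij}z_j(t)$, into $\dot z_i=Az_i+\mathcal{A}u_i$ and stack the equations to obtain the closed loop $\dot z(t)=\big[(I_N\otimes A)-(L\otimes \mathcal{A}F)\big]z(t)$; this ODE for $z$ is well posed, after which $u$ and then $x$ are recovered from (\ref{z}), so there is no difficulty with the implicit appearance of $u_i$ in (\ref{u1}). Structural balance together with Lemma~\ref{lemma1} supplies a gauge $D=\mathrm{diag}(d_1,\ldots,d_N)$, $d_i=\pm1$ with $d_i=+1$ exactly on $\mathcal{V}^{+}$, such that $D\mathcal{W}D$ is nonnegative; hence $DLD=\overline L$ and, since $\overline L\mathbf{1}_N=0$, also $Ld=0$ for $d=(d_1,\ldots,d_N)^{\mathrm T}$. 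Setting $y=(D\otimes I_n)z$ turns the closed loop into $\dot y(t)=\big[(I_N\otimes A)-(\overline L\otimes \mathcal{A}F)\big]y(t)$.

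Next I would apply the similarity $U$ of (\ref{j}), whose first column is $\mathbf{1}_N$ and which satisfies $U^{-1}\overline L U=J_{\overline L}=\mathrm{diag}(0,J)$ with $J$ upper bidiagonal, diagonal $\lambda_2,\ldots,\lambda_N$ and superdiagonal entries $\epsilon_i\in\{0,1\}$. In the coordinates $\eta=(U^{-1}\otimes I_n)y$ the system becomes block upper triangular: the first block obeys $\dot\eta_1=A\eta_1$, so $\eta_1(t)=\mathrm{e}^{At}\eta_1(0)$, which is bounded because $A\in\mathcal{S}$; and for $i\geq2$, $\dot\eta_i=(A-\lambda_i\mathcal{A}F)\eta_i-\epsilon_i\mathcal{A}F\eta_{i+1}$ with $\eta_{N+1}=0$. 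Since each $A-\lambda_i\mathcal{A}F$ is Hurwitz by condition 3), a downward induction on $i$ — at each stage a Hurwitz linear system driven by a vanishing input, handled by variation of constants — yields $\eta_i(t)\to0$ for all $i\geq2$.

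It remains to translate back. From $z=(DU\otimes I_n)\eta$ and the first column of $U$ being $\mathbf{1}_N$ we get $z_i(t)=d_i\eta_1(t)+d_i\sum_{k\geq2}U_{ik}\eta_k(t)$, hence $z_i(t)-d_i\eta_1(t)\to0$. Because $Ld=0$, the surviving mode contributes nothing to the coupling, so $u_i(t)=-F\sum_jl_{ij}z_j(t)=-F\sum_jl_{ij}\big(z_j(t)-d_j\eta_1(t)\big)\to0$. Boundedness of $\mathrm{e}^{A\theta}$ on $[-h,0]$ then forces the finite-window integral in (\ref{z}) to vanish, so $x_i(t)-z_i(t)\to0$, and therefore $x_i(t)-d_i\eta_1(t)\to0$. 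Taking $x^{\ast}(t)=\eta_1(t)$, which is in general nonstationary (cf.\ Remark~\ref{remark1}), gives $x_i(t)\to x^{\ast}(t)$ for $i\in\mathcal{V}^{+}$ and $x_j(t)\to-x^{\ast}(t)$ for $j\in\mathcal{V}^{-}$, i.e.\ equal polarization.

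I expect the main obstacle to be the back-translation step, where one must show that both the communication signal $u_i$ and the integral memory correction in (\ref{z}) decay; this is precisely where the three conditions combine — structural balance (through $DLD=\overline L$ and $Ld=0$) removes the consensus mode from inside the coupling, condition 3) sends $\eta_2,\ldots,\eta_N$ to zero, and $A\in\mathcal{S}$ keeps $\mathrm{e}^{At}$ bounded, so that a vanishing integrand integrated over $[t-h,t]$ vanishes and $\eta_1$ remains a legitimate bounded polarization profile. The cascade estimate for the higher blocks is routine but also relies on this boundedness of $\eta_1$.
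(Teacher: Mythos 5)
Your argument is correct and reaches the same algebraic core as the paper --- the gauge $D$ from Lemma \ref{lemma1}, the similarity $U$ of (\ref{j}), and the Hurwitz property of $A-\lambda_{i}\mathcal{A}F$ --- but it gets there by a genuinely different route. The paper stays with the difference signals $\chi_{i}=-\sum_{j}l_{ij}x_{j}$ and $v_{i}=-\sum_{j}l_{ij}u_{j}$, keeps the distributed-delay term explicit in (\ref{sys3}), and passes to the Laplace domain, where a block-determinant manipulation shows that the $\mathrm{e}^{-sh}$ terms cancel and the characteristic quasi-polynomial collapses to $\prod_{i=2}^{N}\det\left(sI_{n}-\left(A-\lambda_{i}\mathcal{A}F\right)\right)$; it then recovers $x$ through the relation $\varkappa=J_{\overline{L}}\varphi$ and a downward induction in $\varphi_{i}$. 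You instead observe at the outset that the predictor state $z$ of (\ref{z}) obeys the delay-free closed loop $\dot z=(I_{N}\otimes A-L\otimes\mathcal{A}F)z$, so the same cancellation is built in from the start, and the spectral reduction becomes an elementary time-domain cascade $\dot\eta_{i}=(A-\lambda_{i}\mathcal{A}F)\eta_{i}-\epsilon_{i}\mathcal{A}F\eta_{i+1}$ handled by variation of constants. Your back-translation is also different and, if anything, more explicit than the paper's: you prove $u_{i}\to 0$ from $Ld=0$ and then that the finite-window memory integral in (\ref{z}) vanishes, giving $x_{i}-z_{i}\to 0$, whereas the paper works directly with $\varphi=\left(\left(DU\right)^{-1}\otimes I_{n}\right)x$ and never needs the decay of $u_{i}$. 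What the frequency-domain computation buys is a characteristic function one can reuse when the loop genuinely retains delays (as in Section \ref{sec4} and Remark \ref{remark3}); what your approach buys is transparency, since no quasi-polynomial appears at all, at the price of justifying the decay of the memory correction, which you do correctly. One small precision: boundedness of $\mathrm{e}^{A\theta}$ on the compact window $[-h,0]$ holds for any $A$; the hypothesis $A\in\mathcal{S}$ is needed, as you also note, only to keep the surviving mode $\eta_{1}(t)=\mathrm{e}^{At}\eta_{1}(0)$ bounded so that it is a legitimate polarization profile.
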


Regarding the conditions in Theorem \ref{theorem1}, we give some discussions
and social interpretations as follows. In condition 1), the structurally balanced property means that the social community divides into two
hostile camps, the positive influence $w_{ij}>0$ implies that the individuals
come from the same camp and cooperate with each other in the social ties,
whereas the negative influence $w_{ij}<0$ denotes the individuals compete with
each other in the social ties. In addition, quasi-strongly connectivity implies that there is at least one individual who
can transmit directly or indirectly his/her opinions to the remaining
individuals. Condition 2) not only emphasizes that the individuals' dynamical
properties are important, but also eliminates the case that opinions are
neutralizable and unbounded. Finally, condition 3) guarantees that the
individual is open to interpersonal influence.

The social interpretations of Theorem \ref{theorem1} can be summarized as
below. In the social community with two hostile camps, if there is at least one
direct/indirect transmission line from the one individuals to the remaining
individuals and the opinions of individuals influence each other. Then some
individuals support a topic and some one reject a topic under the memory
communication rule. According to Theorem \ref{theorem1}, we have the following new
corollary whose proof is similar to that of Theorem \ref{theorem1}, thus is
omitted.

\begin{corollary}
\label{corollary1}For opinion dynamics model (\ref{sys}) in the cooperative social network $\overline{\mathcal{G}}$, the individuals' opinions achieve
consensus under the communication rule (\ref{u1}) if $\overline
{\mathcal{G}}$ is quasi-strongly connected, $A\in \mathcal{S}$ and
There exists an `opinion adjustment
matrix' $F$ such that
$A-\lambda_{i}\mathcal{A}F,i\in \mathbf{I}[2,N]$ are Hurwitz.
\end{corollary}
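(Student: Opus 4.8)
The plan is to obtain Corollary \ref{corollary1} as a direct specialization of Theorem \ref{theorem1}, exploiting the fact that a cooperative (unsigned) network is nothing but a signed network in which every weight happens to be nonnegative. First I would observe that $\overline{\mathcal{G}}=(\mathcal{V},\mathcal{E},\overline{\mathcal{W}})$ is trivially structurally balanced: taking $D=I_{N}$ in Lemma \ref{lemma1} gives $D\overline{\mathcal{W}}D=\overline{\mathcal{W}}\geq 0$, so one may choose $\mathcal{V}^{+}=\mathcal{V}$ and $\mathcal{V}^{-}=\varnothing$ in Definition \ref{definition1}. Since all weights are nonnegative, the Laplacian and its unsigned counterpart coincide, $L=\overline{L}$, so the eigenvalues $\lambda_{i}$, $i\in\mathbf{I}[2,N]$, that appear in condition 3) of Theorem \ref{theorem1} are exactly those of $\overline{L}$, and by the discussion around \eqref{j} they satisfy $\operatorname{Re}\{\lambda_{i}\}>0$ because $\overline{\mathcal{G}}$ is quasi-strongly connected.

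Next I would check that the three hypotheses of Theorem \ref{theorem1} hold verbatim under the assumptions of the corollary: condition 1) is the structural balance just established together with the assumed quasi-strong connectivity of $\overline{\mathcal{G}}$; condition 2) is the hypothesis $A\in\mathcal{S}$; and condition 3) is the assumed existence of an opinion adjustment matrix $F$ with $A-\lambda_{i}\mathcal{A}F$ Hurwitz for $i\in\mathbf{I}[2,N]$, where $\mathcal{A}=\sigma\Lambda+(1-\sigma)\mathrm{e}^{-Ah}\Lambda$ is as in \eqref{b}. Applying Theorem \ref{theorem1} then yields equal polarization: there is a function $x^{\ast}(t)$ with $\lim_{t\to\infty}x_{i}(t)=x^{\ast}(t)$ for $i\in\mathcal{V}^{+}$ and $\lim_{t\to\infty}x_{j}(t)=-x^{\ast}(t)$ for $j\in\mathcal{V}^{-}$.

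Finally, since $\mathcal{V}^{-}=\varnothing$ the second limiting relation is vacuous and $\lim_{t\to\infty}x_{i}(t)=x^{\ast}(t)$ holds for every $i\in\mathcal{V}$, which is precisely consensus in the sense of Definition \ref{definition2}. An equivalent route, which I would use if a self-contained argument is preferred, is to rerun the proof of Theorem \ref{theorem1} with $D=I_{N}$: the gauge change $x_{i}\mapsto d_{i}x_{i}$ used there to convert the signed dynamics into nonnegative-weight dynamics becomes the identity, so every step goes through unchanged while the limiting pattern collapses from $\pm x^{\ast}(t)$ to $x^{\ast}(t)$. I do not anticipate a genuine obstacle here; the only point requiring a line of care is to confirm that the auxiliary state $z_{i}$ in \eqref{z}, the rule \eqref{u1}, and the coordinate change by $U$ in \eqref{j} are all insensitive to replacing $\mathcal{W}$ by $\overline{\mathcal{W}}$ when the two already agree, which is immediate.
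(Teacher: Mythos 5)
Your proposal is correct and matches the paper's intent: the paper omits the proof, stating only that it is ``similar to that of Theorem \ref{theorem1},'' and your specialization with $D=I_{N}$, $\mathcal{V}^{-}=\varnothing$ (so that equal polarization reduces to consensus exactly as stipulated in Definition \ref{definition2}, with all $d_{i}=1$ in the limit $\lim_{t\to\infty}\Vert x_{i}(t)-d_{i}\varphi_{1}(t)\Vert=0$) is precisely that argument. No gaps.
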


Corollary \ref{corollary1} means that, under the memory communication rules,
all individuals support or reject a topic in the cooperative social
community if there is at least one direct/indirect transmission line from the one
individuals to the remaining individuals and the opinions of individuals
are influenced by others. Necessary and sufficient conditions
guaranteeing the neutralizability of the opinions are given by the following new theorem.
\begin{theorem}
\label{theorem3}For opinion dynamic model (\ref{sys}) in coopetitive social network $\mathcal{G}$. The individuals' opinions are neutralizable under the communication rule (\ref{u1}) if
and only if $\mathcal{G}$ does not involve an
in-isolated structurally balanced subgraph and $(A,\mathcal{A})$ is controllable.
\end{theorem}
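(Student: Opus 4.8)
The plan is to remove the delay through the auxiliary variable (\ref{z}) and reduce neutralizability to a Hurwitz problem on a Kronecker‑structured matrix, then split that problem across the spectrum of the signed Laplacian $L$. Stacking $z=(z_1^{\mathrm T},\dots,z_N^{\mathrm T})^{\mathrm T}$ and using $u_i(t)=-\sum_j l_{ij}z_j(t)$ coming from (\ref{u1}), the dynamics $\dot z_i=Az_i+\mathcal A u_i$ aggregate to $\dot z=Mz$ with $M=I_N\otimes A-L\otimes(\mathcal A F)$ and $\mathcal A$ as in (\ref{b}). I would first note that neutralizability of the $x_i$ is equivalent to $z(t)\to 0$: if $z\to 0$ then $u_i\to 0$, and since $\mathrm e^{A(t-h-s)}$ stays bounded for $s\in[t-h,t]$ the correction term in (\ref{z}) vanishes, so $x_i\to 0$; the converse is obtained in the last step by exhibiting histories along which $u_i\equiv 0$, hence $x_i\equiv z_i$. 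Thus ``the opinions are neutralizable under (\ref{u1})'' means precisely ``there is an $F$ making $M$ Hurwitz''.

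Second, I would decouple $M$ over $\lambda(L)$. Taking a Jordan (or Schur) form $T^{-1}LT=R$, upper triangular with the eigenvalues $\mu_1,\dots,\mu_N$ of $L$ on the diagonal, gives $(T^{-1}\otimes I_n)M(T\otimes I_n)=I_N\otimes A-R\otimes(\mathcal A F)$, block upper triangular with diagonal blocks $A-\mu_k\mathcal A F$; hence $M$ is Hurwitz iff every $A-\mu_k\mathcal A F$ is Hurwitz. The next ingredient is the graph‑spectral dichotomy: $0\in\lambda(L)$ precisely when $\mathcal G$ contains an in‑isolated structurally balanced subgraph, and otherwise $\operatorname{Re}(\mu_k)>0$ for all $k$. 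The ``if'' half is direct — relabel so the subgraph comes first; in‑isolatedness makes $L$ block lower triangular with the subgraph Laplacian $L'$ as a diagonal block, and structural balance together with Lemmas~\ref{lemma1}--\ref{lemma2} give $0\in\lambda(L')\subseteq\lambda(L)$. That $0\notin\lambda(L)$ forces $\operatorname{Re}(\mu_k)>0$ follows from the Gershgorin discs $|z-l_{ii}|\le l_{ii}$ (valid since $l_{ii}=\sum_j|w_{ij}|$ and $|l_{ij}|=|w_{ij}|$ even with signs present), which meet $\iota\mathbf R$ only at $0$; the reverse implication ($L$ singular $\Rightarrow$ an in‑isolated structurally balanced subgraph) I would take from the standard reach/root analysis of signed digraphs.

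Third, the sufficiency direction. Assume no in‑isolated structurally balanced subgraph, so $\operatorname{Re}(\mu_k)\ge a>0$ for all $k$, and $(A,\mathcal A)$ controllable (hence stabilizable). Choose $P>0$ solving the strict Riccati inequality $A^{\mathrm T}P+PA-2a\,P\mathcal A\mathcal A^{\mathrm T}P<0$ (existence from stabilizability of $(A,\mathcal A)$), and set $F=\mathcal A^{\mathrm T}P$. Then $P\mathcal A F=P\mathcal A\mathcal A^{\mathrm T}P$ is Hermitian, so for each $k$ one has $(A-\mu_k\mathcal A F)^{*}P+P(A-\mu_k\mathcal A F)=A^{\mathrm T}P+PA-2\operatorname{Re}(\mu_k)\,P\mathcal A\mathcal A^{\mathrm T}P<0$, making every $A-\mu_k\mathcal A F$ Hurwitz and therefore $M$ Hurwitz; by the first step the opinions neutralize. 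The key point is that the choice $F=\mathcal A^{\mathrm T}P$ makes the imaginary parts of the (generally complex) $\mu_k$ cancel, so a single $F$ serves all modes at once.

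Finally, necessity, argued by contraposition, which is where the real work sits. If $\mathcal G$ has an in‑isolated structurally balanced subgraph, pick a right null vector $v$ of $L$; along the solution started from the history $x_i(\theta)=v_i\mathrm e^{A\theta}\zeta_0$, $u_i\equiv 0$ on $[-h,0]$, one gets $u\equiv 0$ (since $Lv=0$) and $z(t)=v\otimes\mathrm e^{At}\zeta_0=x(t)$, which does not tend to $0$ because $A\in\mathcal S$ carries a semisimple eigenvalue on $\iota\mathbf R$ — so no $F$ neutralizes. If $(A,\mathcal A)$ is not controllable, the PBH eigenvector test yields $w\ne 0$ and $\lambda\in\lambda(A)$ with $w^{*}A=\lambda w^{*}$ and $w^{*}\mathcal A=0$; then $\eta^{*}\otimes w^{*}$ is a left eigenvector of $M$ with eigenvalue $\lambda$ for every $\eta$ and every $F$, so $\lambda\in\lambda(M)$ regardless of $F$. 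The obstacle — and the crux of the theorem — is to turn this into a genuine failure of neutralization, i.e. to exclude that $\lambda$ is a harmless strictly stable uncontrollable mode: here one must combine $A\in\mathcal S$ (so $\lambda(A)\subseteq\mathbf C_{\leq 0}$) with the explicit form $\mathcal A=(\sigma I_n+(1-\sigma)\mathrm e^{-Ah})(I_n-A)$; since $I_n-A$ is nonsingular, $w^{*}\mathcal A=0$ together with $w^{*}A=\lambda w^{*}$ forces $\sigma+(1-\sigma)\mathrm e^{-\lambda h}=0$, and confronting this relation with $\operatorname{Re}(\lambda)\le 0$ (a short case distinction in $\sigma$ and $h$) pins the offending $\lambda$ to $\iota\mathbf R$, so that $\lambda\in\lambda(M)$ really destroys the Hurwitz property. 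Assembling the four steps yields the asserted equivalence.
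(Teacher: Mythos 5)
Your overall architecture coincides with the paper's: both reduce the delayed loop to the delay-free matrix $I_N\otimes A-L\otimes(\mathcal{A}F)$ via the variable $z$ of (\ref{z}), block-triangularize over $\lambda(L)$, and invoke the equivalence between $0\in\lambda(L)$ and the presence of an in-isolated structurally balanced subgraph. You are in fact more complete than the paper in two places: you close the sufficiency with an explicit Riccati-type construction $F=\mathcal{A}^{\mathrm{T}}P$ that handles all complex $\mu_k$ with $\operatorname{Re}(\mu_k)\geq a>0$ simultaneously (the paper merely asserts that controllability yields such an $F$), and you attempt a genuine argument for the necessity of controllability, which the paper dispatches in one sentence by deferring to another reference.

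That last step, however, is where your proof breaks. From $w^{*}A=\lambda w^{*}$, $w^{*}\mathcal{A}=0$ and the invertibility of $I_n-A$ you correctly derive $\sigma+(1-\sigma)\mathrm{e}^{-\lambda h}=0$; but taking moduli gives $\mathrm{e}^{-h\operatorname{Re}\{\lambda\}}=\sigma/(1-\sigma)$, i.e.\ $\operatorname{Re}\{\lambda\}=-\frac{1}{h}\ln\frac{\sigma}{1-\sigma}$, which is strictly negative for every $\sigma\in(1/2,1)$. So the uncontrollable eigenvalue is \emph{not} pinned to $\iota\mathbf{R}$: taking for instance $\lambda=-\frac{1}{h}\ln\frac{\sigma}{1-\sigma}-\iota\pi/h$ one gets $\mathrm{e}^{-\lambda h}=-\sigma/(1-\sigma)$, the PBH test fails at a strictly stable mode, $(A,\mathcal{A})$ is uncontrollable yet stabilizable, every $A-\mu_k\mathcal{A}F$ can still be made Hurwitz, and the opinions neutralize --- contradicting the ``only if'' you are trying to establish. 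The case distinction you allude to only closes under the extra hypothesis $\lambda(A)\subseteq\iota\mathbf{R}$ (Assumption \ref{assumption1}, which is not among the hypotheses of Theorem \ref{theorem3}) or $\sigma\leq 1/2$. To be fair, the paper's own treatment of this direction is no proof at all, and your computation actually exposes that the controllability half of the ``only if'' requires such an additional assumption; but as written, the claim that the offending $\lambda$ lands on $\iota\mathbf{R}$ is false. (A smaller point, shared with the paper: both necessity arguments for the graph condition silently assume $A$ is not Hurwitz, e.g.\ $A\in\mathcal{S}$, which is likewise not stated in the theorem.)
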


The meaning of Theorem \ref{theorem3} is that, under the memory
communication rule, the individuals' opinions are neutralizable in the both
cooperative and competitive if the opinions of individuals influence each
other and the social community in the absence of two hostile camps.

\section{Memoryless Communication Rules}\label{sec4}

In order to establish the memoryless communication rules, we first let the
opinion adjustment matrix $F$ be parameterized as $F=F\left(  \gamma \right)
:\mathbf{R}^{+}\rightarrow \mathbf{R}^{n\times n}$ \cite{zld12auto} and such that
\begin{equation}
\lim_{\gamma \downarrow0}\frac{1}{\gamma}\left \Vert F\left(  \gamma \right)
\right \Vert <\infty. \label{eq2}%
\end{equation}
Hence, the communication rules $u_{i}\left(  t\right)  $ is \textquotedblleft
of order $1$\textquotedblright \ with respect to $\gamma$. As a result, we can
get that the second term $F\sum_{j=1}^{N}l_{ij}\int_{t-h}%
^{t}\mathrm{e}^{A\left(  t-h-s\right)  }\Lambda u_{j}\left(  s\right)  \mathrm{d}s$
in (\ref{u2}) are at least \textquotedblleft of order 2\textquotedblright%
\ with respect to $\gamma$, which means that the term $F\sum_{j=1}^{N}l_{ij}\int_{t-h}^{t}\mathrm{e}^{A\left(  t-h-s\right)  }%
\Lambda u_{j}\left(  s\right)  \mathrm{d}s$ is dominated by the term $F\chi
_{i}\left(  t\right)  $ in (\ref{u2}), and thus might be safely neglected in
$u_{i}\left(  t\right)  $ when $\gamma$ is sufficiently small. Hence, the memoryless communication rules can be represented as
\begin{equation}
u_{i}\left(  t\right)  =F\left(  \gamma \right) \chi_{i}\left(  t\right)  . \label{u3}%
\end{equation}
The following assumption guarantees that
there exists a opinion adjustment matrix $F\left(  \gamma \right)  $ satisfying
(\ref{eq2}).% In this section, to make the ideas in this paper easy to follow, let $\sigma\triangleq\sigma_i$, then $\mathcal{A}\triangleq\mathcal{A}_{i}, i\in \mathbf{I}[1,N]$.

\begin{assumption}
\label{assumption1}The matrix pair $\left(  A,\mathcal{A}\right)  $ is
controllable, and all the eigenvalues of $A$ are on the imaginary axis.
\end{assumption}

If Assumption \ref{assumption1} is satisfied, the following parametric
algebraic Riccati equation (ARE)% \cite{zld12auto}
\begin{equation}
A^{\mathrm{T}}P+PA-P\mathcal{A}\mathcal{A}^{\mathrm{T}}P=-\gamma P,
\label{eq3}%
\end{equation}
exists a unique positive definite solution $P(\gamma), \forall\gamma>0$, and $F=-\mathcal{A}^{\mathrm{T}}P\left(  \gamma \right)  $
satisfy (\ref{eq2}).

\begin{theorem}
\label{theorem2}Consider opinion dynamics model (\ref{sys}) with Assumption \ref{assumption1} in the coopetitive
social networks $\mathcal{G}$. Let $P\left(\gamma \right)$ be the unique solution of (\ref{eq3}) and define $F=-\mathcal{A}^{\mathrm{T}}P\left(  \gamma \right)  $. Then, for any $h>0$ and $\varrho \geq \max_{i\in \mathbf{I}%
[2,N]}\{1/\operatorname{Re}\{ \lambda_{i}\} \}$, there exists a number
$\gamma^{\ast}=\gamma^{\ast}\left(  \varrho,h,\{ \lambda_{i}\}_{i=2}%
^{N}\right)  $ such that the individuals' opinions are equal polarization by the memoryless communication rules%
\begin{equation}
u_{i}\left(  t\right)  =\varrho F\left(  \gamma \right)\chi_{i}\left(  t\right)  ,\gamma
\in \left(  0,\gamma^{\ast}\right)  , \label{u4}%
\end{equation}
if the signed graph $\mathcal{G}$ is structurally balanced and is
quasi-strongly connected.
\end{theorem}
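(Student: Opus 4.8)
The plan is to follow the route of Theorem~\ref{theorem1}: make the coopetitive network cooperative by a diagonal gauge change, block‑triangularize the resulting consensus problem, and then treat the memoryless law \eqref{u4} as the memory law of Section~\ref{sec3} perturbed by a distributed delay that is quadratically small in the gain. \textbf{Step 1 (gauge reduction).} Since $\mathcal G$ is structurally balanced, Lemma~\ref{lemma1} gives $D=\operatorname{diag}(d_1,\dots,d_N)$, $d_i=\pm1$, with $\overline L=DLD$. Setting $\tilde x_i=d_i x_i$ turns the closed loop of \eqref{sys} and \eqref{u4} into the \emph{same} closed loop on the cooperative graph $\overline{\mathcal G}$: the scalars $d_i$ commute through $A$ and $\Lambda=I_n-A$, and $\tilde\chi_i=-\sum_j\overline l_{ij}\tilde x_j$, so \eqref{u4} keeps its form. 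Equal polarization of $x$ is exactly consensus of $\tilde x$, with $x^\ast$ the common limit of the agents having $d_i=+1$. Since $\overline{\mathcal G}$ has the same edges as $\mathcal G$ it is quasi-strongly connected, so \eqref{j} applies to $\overline L$ and, by Lemma~\ref{lemma2}, its nonzero eigenvalues are exactly the $\lambda_i$ with $\operatorname{Re}\lambda_i>0$. From here I work on $\overline{\mathcal G}$ and drop the tildes.

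\textbf{Step 2 (reduced dynamics).} With $u_i=\varrho F(\gamma)\chi_i=-\varrho F\sum_j\overline l_{ij}x_j$ and $z_i$ as in \eqref{z}, one has $\dot z_i=Az_i+\mathcal A u_i$; substituting $x_j=z_j-(1-\sigma)\int_{t-h}^{t}\mathrm e^{A(t-h-s)}\Lambda u_j(s)\,\mathrm{d}s$ into $u_i$ gives, in stacked form,
\[
\dot z(t)=(I_N\otimes A)z(t)-\varrho\,(\overline L\otimes\mathcal A F)z(t)+r(t),
\]
whose nominal part is precisely the Section~\ref{sec3} closed loop with $F$ replaced by $\varrho F$ (so it is covered by Theorem~\ref{theorem1} once $A-\varrho\lambda_i\mathcal A F$ is Hurwitz), while $r(t)=\varrho(1-\sigma)(\overline L\otimes\mathcal A F)\int_{t-h}^{t}(I_N\otimes\mathrm e^{A(t-h-s)}\Lambda)u(s)\,\mathrm{d}s$ with $u(s)=-\varrho(\overline L\otimes F)x(s)$. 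Because \eqref{eq2} forces $\|F(\gamma)\|=O(\gamma)$, the remainder $r$ carries two factors of $F$, hence $\|r(t)\|\le c_1\gamma^2\sup_{s\in[t-h,t]}\|z(s)\|$ with $c_1$ depending only on $h,\varrho,\overline L$. Applying $U^{-1}\otimes I_n$ from \eqref{j} and using $\ker\overline L=\operatorname{span}\{\mathbf 1_N\}$, the consensus coordinate decouples as $\dot\eta_1=A\eta_1=:\dot x^\ast$, while the disagreement coordinate $\zeta=(\eta_2,\dots,\eta_N)$ satisfies
\[
\dot\zeta(t)=(I_{N-1}\otimes A)\zeta(t)-\varrho\,(J\otimes\mathcal A F)\zeta(t)+r_\zeta(t),\qquad \|r_\zeta(t)\|\le c_1\gamma^2\sup_{s\in[t-h,t]}\|\zeta(s)\|,
\]
with $J$ the upper-bidiagonal block of \eqref{j}; it remains to prove $\zeta(t)\to0$.

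\textbf{Step 3 (Lyapunov--Krasovskii).} I would take, over $\mathbf C$, $W=\sum_{i=2}^{N}\kappa^{N-i}\!\big(\eta_i^\ast P(\gamma)\eta_i+\mu\int_{-h}^{0}\!\int_{t+\theta}^{t}\|\eta_i(s)\|^2\,\mathrm{d}s\,\mathrm{d}\theta\big)$ with $\kappa,\mu>0$ small. For the nominal part, inserting the ARE \eqref{eq3} and using $\varrho\operatorname{Re}\lambda_i\ge1$ makes each diagonal block contribute $-\gamma\,\eta_i^\ast P\eta_i-\|\mathcal A^{\mathrm T}P\eta_i\|^2$, while the upper-triangular cross term (coefficient $\varrho\epsilon_i$, of size $O(\gamma)$) is split by a weighted Young inequality between a fraction of $-\gamma\eta_i^\ast P\eta_i$ and the $-\|\mathcal A^{\mathrm T}P\eta_{i+1}\|^2$ budget of block $i+1$; the required $\kappa$ depends only on $\varrho$ because $\|P(\gamma)\mathcal A\|^2/(\gamma\,\lambda_{\min}P(\gamma))$ stays bounded as $\gamma\downarrow0$ by the low-gain property. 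The remainder contributes $2\sum_i\kappa^{N-i}\operatorname{Re}(\eta_i^\ast P\,r_{\zeta,i})$, and since $\|P(\gamma)r_{\zeta,i}(t)\|=O(\gamma^3)\sup_{[t-h,t]}\|\zeta\|$, Cauchy--Schwarz together with the Jensen inequality (Lemma~\ref{lemmaadd}) bounds it by $c_2\gamma^3\big(h\|\zeta(t)\|^2+\int_{t-h}^{t}\|\zeta(s)\|^2\,\mathrm{d}s\big)$ with $c_2=c_2(\varrho,h,\{\lambda_i\})$. Choosing $\mu$ of order $\gamma^3$ cancels the $\int_{t-h}^{t}$ terms against the Krasovskii derivative, leaving $\dot W\le-\tfrac{\gamma}{2}\sum_i\kappa^{N-i}\eta_i^\ast P\eta_i+c_3\gamma^3 h\|\zeta(t)\|^2$; since $\eta_i^\ast P(\gamma)\eta_i\ge\lambda_{\min}(P(\gamma))\|\eta_i\|^2\ge c_0\gamma\|\eta_i\|^2$, the negative term dominates at order $\gamma^2$ the $O(\gamma^3)$ remainder, so there is $\gamma^\ast=\gamma^\ast(\varrho,h,\{\lambda_i\}_{i=2}^{N})$ with $\dot W<0$ for $\gamma\in(0,\gamma^\ast)$ whenever $\zeta_t\neq0$. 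Hence $\zeta(t)\to0$, i.e.\ $d_ix_i(t)-x^\ast(t)\to0$, which is the asserted equal polarization.

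\textbf{Main obstacle.} Steps~1--2 are geometric and routine; the real work is Step~3 — carrying the upper-triangular coupling \emph{and} the distributed-delay remainder through one complex Lyapunov--Krasovskii functional while keeping every auxiliary constant ($\kappa,\mu$, and the two-sided bounds $c_0\gamma\le\lambda_{\min}P(\gamma)\le\|P(\gamma)\|\le c'\gamma$ coming from the low-gain solution of \eqref{eq3}) uniform in $\gamma$, so that the threshold $\gamma^\ast$ genuinely depends only on $\varrho$, $h$ and $\{\lambda_i\}$. Lemma~\ref{lemmaadd} is exactly what converts the $\sup_{[t-h,t]}$ appearing in $r_\zeta$ into a pointwise-in-$t$ quantity absorbable by the $O(\gamma)$ nominal margin, which is the crux of why dropping the integral term in passing from \eqref{u2} to \eqref{u4} is legitimate for small $\gamma$.
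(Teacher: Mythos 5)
Your proposal follows essentially the same route as the paper: the paper likewise gauges out $D$, block-triangularizes with the $U$ of (\ref{j}), reduces to the modal delay systems, and proves their stability in Lemma \ref{lemma3} via exactly your predictor variable $z=\eta+\psi$, the parametric ARE (\ref{eq3}), and a Jensen-plus-Krasovskii estimate showing the distributed-delay term enters at order $\gamma^{2}$ relative to the $O(\gamma)$ stability margin. The only organizational differences are that the paper handles the Jordan coupling by cascading the modes rather than by your $\kappa$-weighted functional, and it keeps the remainder bounded by $\eta^{\mathrm T}P\eta$ (via $P\mathcal{A}\mathcal{A}^{\mathrm T}P\le n\gamma P$) rather than by $\Vert \zeta\Vert^{2}$, which sidesteps your reliance on the two-sided bound $\lambda_{\min}P(\gamma)\ge c_{0}\gamma$.
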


This new theorem implies that, there exist possibilities of opinion
adjustment such that some individuals support a topic and others reject a
topic under memoryless communication rules if the social community can be
divided into two hostile camps. With Theorem \ref{theorem2}, we
have a new corollary as follows.

\begin{corollary}
\label{corollary3}Consider opinion dynamics model (\ref{sys}) with Assumption \ref{assumption1} in the cooperative social network $\overline{\mathcal{G}}$. Then, for
any $h>0$ and $\varrho \geq \max_{i\in \mathbf{I}[2,N]}\{1/\operatorname{Re}\{
\lambda_{i}\} \}$, there exists a number $\gamma^{\ast}=\gamma^{\ast}\left(
\varrho,h,\{ \lambda_{i}\}_{i=2}^{N}\right)  $, such that the individuals'
opinions achieve consensus for all $\gamma
\in \left(  0,\gamma^{\ast}\right)  $ by the memoryless communication rule
(\ref{u4}) if $\overline{\mathcal{G}}$ is quasi-strongly connected.
\end{corollary}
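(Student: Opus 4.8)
The plan is to deduce Corollary \ref{corollary3} from Theorem \ref{theorem2} by viewing a cooperative network as a degenerate structurally balanced signed graph. First I would regard $\overline{\mathcal{G}}$ as a signed graph all of whose weights happen to be nonnegative; then Lemma \ref{lemma1} is satisfied with $D=I_{N}$, equivalently the trivial bipartition $\mathcal{V}^{+}=\mathcal{V}$, $\mathcal{V}^{-}=\varnothing$ shows $\overline{\mathcal{G}}$ is structurally balanced, and the Laplacian $\overline{L}$ coincides with the matrix $L$ appearing in Theorem \ref{theorem2} (i.e. $L=\overline{L}$). Consequently the nonzero spectrum $\{\lambda_{2},\dots,\lambda_{N}\}$ used in the corollary is exactly the one entering (\ref{j}), and quasi-strong connectivity of $\overline{\mathcal{G}}$ yields $\operatorname{Re}\{\lambda_{i}\}>0$ for $i\in\mathbf{I}[2,N]$, so $\varrho\geq\max_{i}\{1/\operatorname{Re}\{\lambda_{i}\}\}$ is meaningful.

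Next I would verify that the hypotheses of Theorem \ref{theorem2} hold: Assumption \ref{assumption1} concerns only the pair $(A,\mathcal{A})$ and is carried over verbatim; $\overline{\mathcal{G}}$ is quasi-strongly connected by hypothesis; and it is (trivially) structurally balanced by the previous step. Since $\mathrm{sgn}(w_{ij})=1$ whenever $w_{ij}\neq 0$ in the cooperative case, the memoryless rule (\ref{u4}) reduces to the Laplacian feedback $u_{i}(t)=\varrho F(\gamma)\chi_{i}(t)$ with $\chi_{i}(t)=-\sum_{j}\overline{l}_{ij}x_{j}(t)$, which is precisely the protocol governed by Theorem \ref{theorem2}. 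Hence Theorem \ref{theorem2}, applied to $\mathcal{G}=\overline{\mathcal{G}}$ with $P(\gamma)$ solving (\ref{eq3}) and $F=-\mathcal{A}^{\mathrm{T}}P(\gamma)$, produces a threshold $\gamma^{\ast}=\gamma^{\ast}(\varrho,h,\{\lambda_{i}\}_{i=2}^{N})$ such that the opinions are equal polarization in the sense of Definition \ref{definition2} for every $\gamma\in(0,\gamma^{\ast})$.

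Finally I would close the argument by unwinding Definition \ref{definition2}: when $\mathcal{V}^{-}=\varnothing$, equal polarization asserts the existence of $x^{\ast}(t)$ with $\lim_{t\to\infty}x_{i}(t)=x^{\ast}(t)$ for all $i\in\mathcal{V}^{+}=\mathcal{V}$, which is exactly consensus. I do not expect any genuine analytic obstacle here --- the content is entirely in Theorem \ref{theorem2}, and the present step is bookkeeping. The one point I would be careful about is checking that the spectral data $\{\lambda_{i}\}$ and the design objects $P(\gamma)$, $F$ named in the corollary are literally those constructed in Theorem \ref{theorem2} after the substitution $D=I_{N}$, so that no Lyapunov--Krasovskii functional or ARE perturbation estimate has to be re-derived; an alternative, more pedestrian route would be to transcribe the proof of Theorem \ref{theorem2} line by line with $D=I_{N}$ throughout, but that merely reproduces the same computation.
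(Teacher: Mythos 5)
Your proposal is correct and follows exactly the route the paper intends: the paper offers no separate proof of Corollary \ref{corollary3}, presenting it as the immediate specialization of Theorem \ref{theorem2} to the unsigned graph, i.e.\ taking $D=I_{N}$ (so $L=\overline{L}$ and the spectral data are unchanged) and invoking the clause of Definition \ref{definition2} that equal polarization with $\mathcal{V}^{-}=\varnothing$ is consensus. Your careful bookkeeping of why the hypotheses and the objects $P(\gamma)$, $F$, $\{\lambda_{i}\}$ carry over verbatim is precisely the (omitted) content of the paper's argument.
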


Corollary \ref{corollary3} means that, in the cooperative social community,
there exist possibilities of opinion adjustment such that all individuals
support or reject a topic under the memoryless communication rules. By Theorem
\ref{theorem2}, necessary and sufficient conditions guaranteeing the
neutralizability of the opinions for the proposed opinion dynamics model
(\ref{sys}) are given as follows.
\begin{theorem}
\label{theorem4}Consider opinion dynamics model (\ref{sys}) with Assumption \ref{assumption1} in coopetitive social networks $\mathcal{G}$. Let $P\left(  \gamma \right)  $ be unique solution of (\ref{eq3}). Then, for any $h>0$ and $\varrho \geq \max
_{i\in \mathbf{I}[2,N]}\{1/\operatorname{Re}\{ \lambda_{i}\} \}$, there exists
a number $\gamma^{\ast}=\gamma^{\ast}\left(  \varrho,h,\{ \lambda_{i}%
\}_{i=2}^{N}\right)  $, such that the individuals' opinions are neutralizable for all $\gamma \in \left(  0,\gamma^{\ast
}\right)  $ by the memoryless communication rule (\ref{u4}) if and only if
$\mathcal{G}$ does not involve an in-isolated
structurally balanced subgraph.
\end{theorem}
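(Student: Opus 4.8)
The plan is to treat the two implications by rather different means. For necessity I would argue by contraposition: suppose $\mathcal{G}$ contains an in-isolated structurally balanced subgraph on a vertex set $\mathcal{V}'\subseteq\mathcal{V}$. Because no edge enters $\mathcal{V}'$ from outside, $l_{ij}=0$ whenever $i\in\mathcal{V}'$ and $j\notin\mathcal{V}'$; hence the principal block $L':=L[\mathcal{V}',\mathcal{V}']$ is exactly the Laplacian of the induced subgraph and the sub-collection $\{x_i:i\in\mathcal{V}'\}$ evolves autonomously under (\ref{sys})--(\ref{u4}). By Lemma \ref{lemma1} there is a sign matrix $D'$ with $D'L'D'$ a nonnegative (unsigned) Laplacian, so $D'L'D'\mathbf{1}=0$ and the vector $v:=D'\mathbf{1}\neq0$ satisfies $L'v=0$. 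The subspace $\mathrm{span}(v)\otimes\mathbf{R}^{n}$ is then invariant for the $\mathcal{V}'$-subsystem -- on it each $\chi_i$ is proportional to $L'v=0$ and therefore vanishes -- so the restricted flow collapses to $\dot\xi=A\xi$ with $x_i=v_i\xi$. Since Assumption \ref{assumption1} forces $\lambda(A)\subseteq\iota\mathbf{R}$, the matrix $\mathrm{e}^{At}$ does not converge to $0$, so with the initial history $x_i(\theta)=v_i\mathrm{e}^{A\theta}\xi_0$ ($i\in\mathcal{V}'$, $\theta\in[-h,0]$) and $\xi_0\neq0$ one gets $x_i(t)=v_i\mathrm{e}^{At}\xi_0\not\to0$, and this holds for every $\gamma>0$. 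Hence the opinions are not neutralizable, which is the desired contrapositive.

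For sufficiency, assume $\mathcal{G}$ has no in-isolated structurally balanced subgraph. The first step is to draw the spectral conclusion that $L$ is then nonsingular with $\operatorname{Re}\lambda_i>0$ for all $i\in\mathbf{I}[1,N]$: Gershgorin's theorem confines $\lambda(L)$ to the closed right half-plane with $\iota\mathbf{R}\cap\lambda(L)\subseteq\{0\}$, while, decomposing $\mathcal{G}$ along its condensation, each source (in-isolated) component is by hypothesis structurally unbalanced and hence has a nonsingular Laplacian (Lemma \ref{lemma1} plus the structural-balance dichotomy), and every non-source diagonal block is nonsingular by irreducible diagonal dominance, so $\det L\neq0$ by block-triangularity. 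With this in hand I would repeat the construction in the proof of Theorem \ref{theorem2}: substitute (\ref{u4}) into (\ref{sys}), pass to the variable $z$ of (\ref{z}) so that $\dot z_i=Az_i+\mathcal{A}u_i$, rewrite $\chi_i$ through (\ref{z}), and use that $u_i$ is of order $\gamma$ to reach $\dot z=(I_N\otimes A)z-\varrho(L\otimes\mathcal{A}F(\gamma))z$ up to a distributed-delay term of order $\gamma^{2}$. Reducing $L$ to Jordan form decouples this into $\dot\zeta_k=(A-\varrho\lambda_k\mathcal{A}F(\gamma))\zeta_k+(\text{higher-order terms})$; with $F=-\mathcal{A}^{\mathrm T}P(\gamma)$ (the solution of the ARE (\ref{eq3})) and $\varrho$ chosen as in Theorem \ref{theorem2}, the low-gain estimate makes each nominal mode Hurwitz with decay rate of order $\gamma$, and a Lyapunov--Krasovskii functional whose cross terms are bounded by Jensen's inequality (Lemma \ref{lemmaadd}) absorbs the $O(\gamma^{2})$ delay perturbation for all $\gamma\in(0,\gamma^{\ast})$ with $\gamma^{\ast}=\gamma^{\ast}(\varrho,h,\{\lambda_i\})$ small enough. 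The only difference from Theorem \ref{theorem2} is that no $\lambda_k$ equals $0$, so the surviving $\dot\zeta=A\zeta$ mode of that theorem is gone: every mode decays, $z(t)\to0$, and since $\|F(\gamma)\|=O(\gamma)$ by (\ref{eq2}), (\ref{z}) then forces $x_i(t)\to0$ for all $i\in\mathcal{V}$.

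The main obstacle is the sufficiency direction, and within it this first step. Theorem \ref{theorem2} was set in the structurally balanced, quasi-strongly connected regime where Lemma \ref{lemma2} fixes the Jordan structure of $L$; here neither assumption is available, so the real work is to extract ``$L$ nonsingular, $\operatorname{Re}\lambda_i>0$'' from the purely combinatorial hypothesis and then to check that the Jordan-block bookkeeping and the low-gain/Krasovskii estimates of Theorem \ref{theorem2} still go through for this more general $L$ -- which they do essentially verbatim once the polarization mode is removed. The necessity direction is comparatively routine: it only uses that an in-isolated structurally balanced subgraph simultaneously decouples from the rest of the network and supplies a kernel vector of $L$ along which the closed loop reduces to $\dot\xi=A\xi$, a system that is never asymptotically stable when $\lambda(A)\subseteq\iota\mathbf{R}$.
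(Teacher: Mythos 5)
Your proof is correct, and it follows the route the paper intends --- but the paper never actually writes out a proof of this theorem: it is announced as a consequence of Theorem \ref{theorem2}, with the necessity half implicitly borrowed from the proof of Theorem \ref{theorem3}, where neutralizability is reduced to the Hurwitz property of $A-\varepsilon\mathcal{A}F$ for every $\varepsilon\in\lambda(L)$ and the presence of an in-isolated structurally balanced subgraph is identified with $0\in\lambda(L)$. What you supply beyond the paper are precisely the two pieces it omits. First, a proof of the spectral dichotomy: that the absence of an in-isolated structurally balanced subgraph forces $\det L\neq 0$, and hence, with Gershgorin confining $\lambda(L)$ to the closed right half-plane meeting $\iota\mathbf{R}$ only at the origin, $\operatorname{Re}\{\lambda_i\}>0$ for all $i$; your argument via the condensation into strongly connected components together with the balance criterion of Lemma \ref{lemma1} is the standard and correct way to get this. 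Second, an explicit witness for necessity: the kernel vector $v=D'\mathbf{1}$ of the decoupled block $L'$ and the compatible history $x_i(\theta)=v_i\mathrm{e}^{A\theta}\xi_0$, along which $\chi_i\equiv 0$, the closed loop collapses to $\dot{\xi}=A\xi$, and convergence to zero is impossible because Assumption \ref{assumption1} places $\lambda(A)$ on the imaginary axis; this is sharper than the paper's bare assertion that $A-0\cdot\mathcal{A}F=A$ fails to be Hurwitz. The sufficiency half then reuses the low-gain Lyapunov--Krasovskii machinery of Lemma \ref{lemma3} mode by mode, exactly as in the paper's proof of Theorem \ref{theorem2}, the only change being that no zero eigenvalue survives, so every modal subsystem decays and (\ref{z}) returns $x_i(t)\rightarrow 0$. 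One small point worth flagging: once all $N$ eigenvalues of $L$ have positive real part, the gain condition should read $\varrho\geq\max_{i\in\mathbf{I}[1,N]}\{1/\operatorname{Re}\{\lambda_i\}\}$ rather than the range $\mathbf{I}[2,N]$ carried over from the structurally balanced case; your phrase that ``every mode decays'' silently assumes this corrected range.
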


Theorem \ref{theorem4} means that there exist possibilities
of opinion adjustment so that individual opinions are neutralizable
under the memoryless communication rules if the social community in the
absence of two hostile camps.
\begin{remark}
\label{remark2}Compared with \cite{lxhb19tac}, this section possesses some
differences and novelty. On the one hand, we would like to emphasize that the opinion
dynamics model (\ref{sys}) involves memory $h$, and the model in this paper is more complex. However, the opinion
dynamics model in absence of the memory in \cite{lxhb19tac}. On the other hand, in this paper, the opinion adjustment
matrix $F\left(  \gamma \right)  $ of the memoryless communication rule (process) is a
variable with respect to $\gamma$, which is easily adjusted in
the real world. %Finally, sufficient and/or necessary conditions for the equal
%polarization, consensus and neutralizability of the opinions are completely
%different from that of \cite{lxhb19tac}.
\end{remark}
\begin{remark}
\label{remark3}We will give a method to estimate the upper
bound of $\gamma^{\ast}$ as follows. According to the proof of Theorem
\ref{theorem2}, we have
\[
\dot{\eta}\left(  t\right)  =A\eta \left(  t\right)  +\sigma \Lambda \digamma
\eta \left(  t\right)  +(1-\sigma)\Lambda \digamma \eta \left(  t-h\right)  ,
\]
is asymptotically stable, where $\digamma=-\varrho \mathcal{A}^{\mathrm{T}%
}P\left(  \gamma \right)  $. The stability of the above equation is completely
determined by the rightmost roots of its characteristic equation%
\[
c\left(  s,\gamma \right)  =\det \left(  sI_{n}-A-\sigma \Lambda \digamma
-(1-\sigma)\Lambda \digamma \mathrm{e}^{-sh}\right)  .
\]
Let%
\[
\lambda_{\max}\left(  \gamma \right)  =\max \{ \operatorname{Re}\{s\}:c\left(
s,\gamma \right)  =0\},
\]
which can be efficiently computed by the software package DDE-BIFTOOL
\cite{els01}. Then the (35) is asymptotically
stable if and only if $\lambda_{\max}\left(  \gamma \right)  <0$. The maximal
$\gamma$ such that $\lambda_{\max}\left(  \gamma \right)  =0$ is defined as
$\gamma^{\ast}$.
\end{remark}

\section{Application to Kahneman's Experiments}\label{sec5}

In this section, we will use our proposed model to revisit Kahneman's seminal experiments on choices in risky and riskless contexts
\cite{kt84ap}. To totally appreciate and understand the results, we present a brief
overview of the famous experiments. Assume that a city is preparing for an outbreak of an
unusual disease (such as, COVID-$19$), and COVID-$19$ is expected to kill
$600$ people. Two alternative programs to combat COVID-$19$ have been proposed, wherein the
\textquotedblleft lives saved version" (LSV) of the scientific estimates of the
consequences for the proposed programs are given as follows.

\begin{itemize}
\item A1: If Program $1$ is adopted, $200$ people will be saved.

\item B1: If Program $2$ is adopted, the probability of $600$ people being saved is
$1/3$, and the probability of no one being saved is $2/3$.
\end{itemize}

The ``lives lost version" (LLV) of the scientific estimates of the consequences for
the proposed programs are stated as below, which is undistinguishable in real
terms from the ``LSV" \cite{kt84ap}.

\begin{itemize}
\item A2: If Program $1$ is adopted, $400$ people will die.

\item B2: If Program $2$ is adopted, the probability that no one will die is $1/3$,
and the probability that $600$ people will die is $2/3$.
\end{itemize}

The percentage who chose each option is given in Table
1 after each program \cite{kt84ap}.
%We say that individuals support the program if the percentage is greater than $50\%$, and
%individuals reject the program if the percentage is less than $50\%$.
It is interesting to see from Table 1 that the consequences of the
\textquotedblleft LSV" is completely different from the
consequences of the \textquotedblleft LLV" even A1 and B1 are
undistinguishable from A2 and B2, respectively.
\begin{figure}
\centering
\includegraphics[scale=0.5]{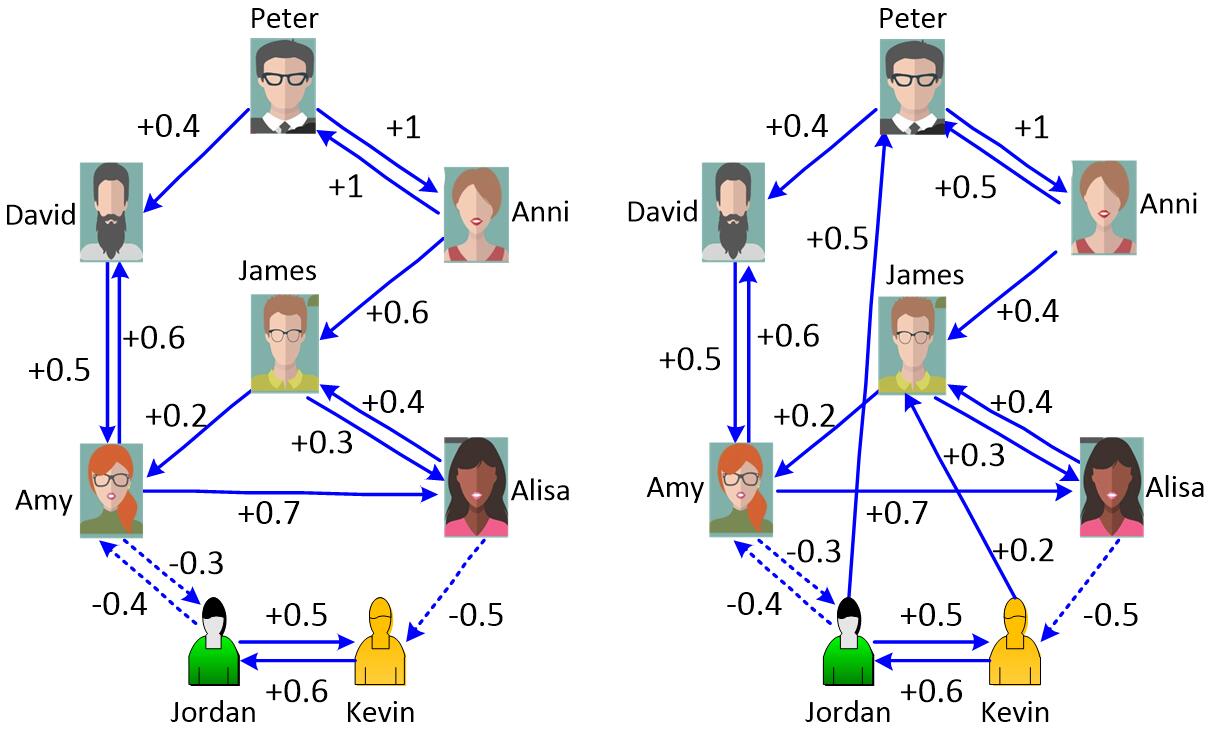}
\caption{Left: Structurally balanced community. Right: Structurally unbalanced
community.}%
\label{fig2}%
\end{figure}
\begin{figure}
\centering
\includegraphics[scale=0.65]{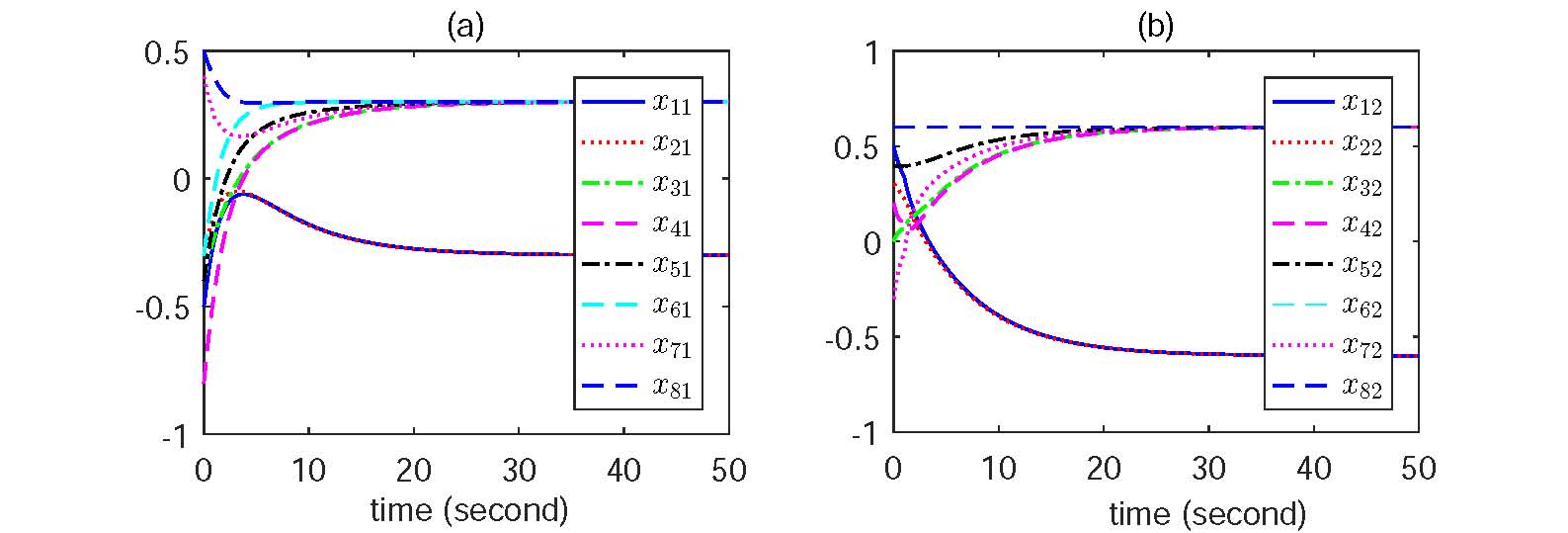}
\caption{Equal polarization of opinion dynamics model consisting of
(\ref{sys}) and (\ref{u1}) in the coopetitive community.}%
\label{fig12}%
\end{figure}
\begin{table}
\centering
Table 1: The results of the experiments for different versions
\label{table1}%
\centering
\begin{tabular}
[c]{c|c|c}\hline \hline
~~~~Different Versions~~&\,\,\,\,LSV~~~~~~~~&\,\,\,\,LLV~~~~\\ \cline{1-3}%
~~~~Different Programs~~ & \,\,A1\,\,\,\,B1 & ~A2\,\,\,\,B2~\\ \hline
~~~~Percentages~~& $~~72\%$\,\,\,\,$28\%$ & $~~22\%$\,\,\,\,$78\%$~\\ \hline
\end{tabular}
\end{table}

In the following, Kahneman's seminal experiments are simulated by the
proposed opinion dynamics model with the memory communication rules. Let the structurally balanced social community network be given in Fig. \ref{fig2} (Left), and Jordan, Kevin, Alisa, Amy, James, Anni, David, Peter are assigned labels $1,2,3,4,5,6,7$ and $8$,
respectively. In this particular case, cooperative link
means that the individuals come from government officials or well-known
doctors, whereas competitive link means that one comes from government
officials and another one comes from well-known doctors, and memory
information means that the past choices (opinions) of the government officials
and well-known doctors in risky and riskless contexts. Let $h=1$, $\sigma=0.6$, and
$A$ is given by
\[
A=\left[
\begin{array}
[c]{cc}%
-0.5 & 0.25\\
0 & 0
\end{array}
\right]  .
\]
For simplicity, without loss of generality, we explain the meaning of matrix
$A$ by A1 and B1 of \textquotedblleft LSV\textquotedblright as follows. $A_{11}=-0.5$ means that individual $i$ lacks cognitive ability on A1, $A_{12}=0.25$ means that individual $i$ possesses cognitive ability on B1 before the individual communicates with his neighbors. The corresponding
matrices $\mathcal{A}$ and $F$ are respectively computed as%
\begin{align*}
\mathcal{A}=\left[
\begin{array}
[c]{cc}%
2.4731 & -0.7365\\
0 & 1
\end{array}
\right], F=\left[
\begin{array}
[c]{cc}%
-0.0615 & -0.4059\\
0 & -1.1564
\end{array}
\right],
\end{align*}
where $F$ is such that $\lambda \left(  A-1.643\mathcal{A}F\right)
=\{-0.75,-1.9\}$.

Let the initial opinions for the six individuals be $x_{1}\left(  0\right)
=\left[  -0.5,0.5\right]  ^{\mathrm{T}}$, $x_{2}\left(  0\right)  =\left[
-0.3,0.3\right]  ^{\mathrm{T}}$, $x_{3}\left(  0\right)  =\left[
-0.4,0\right]  ^{\mathrm{T}}$, $x_{4}\left(  0\right)  =\left[
-0.8,0.2\right]  ^{\mathrm{T}}$, $x_{5}\left(  0\right)  =\left[
-0.4,0.4\right]  ^{\mathrm{T}}$, $x_{6}\left(  0\right)  =\left[
-0.3,0.6\right]  ^{\mathrm{T}}$, $x_{7}\left(  0\right)  =\left[
0.4,-0.3\right]  ^{\mathrm{T}}$ and $x_{8}\left(  0\right)  =\left[
0.5,0.6\right]  ^{\mathrm{T}}$. The simulation results are recorded in Fig. \ref{fig12},
where $x_{i1}$ denotes the opinion of individual $i$ on A1 and $x_{i2}$ denotes the opinion of individual $i$ on B2.

Clearly, according to the left of Fig. \ref{fig12}, six individuals support A1, where the percentage who chose A1 is $\frac{6}{8}=75\%$, which almost fits with the percentage $72\%$ who chose A1 in Table 1. In addition, two individuals reject A1, namely, two individuals support B1, where the percentage who chose B1 is $\frac{2}{8}=25\%$, which almost fits with the percentage $28\%$ who chose B1 in Table 1.

Similarly, in terms of the right of Fig. \ref{fig12}, six individuals support B2, where the percentage who chose B2 is $\frac{6}{8}=75\%$, which almost fits with the percentage $78\%$ who chose B2 in Table 1. In addition, two individuals reject B2, namely, two individuals support A2, where the percentage who chose A2 is $\frac{2}{8}=25\%$, which almost fits with the percentage $22\%$ who chose A2 in Table 1. Therefore, the simulation results in Fig. \ref{fig12} nearly fit in
Kahneman experiment.

\begin{figure}
\centering
\includegraphics[scale=0.65]{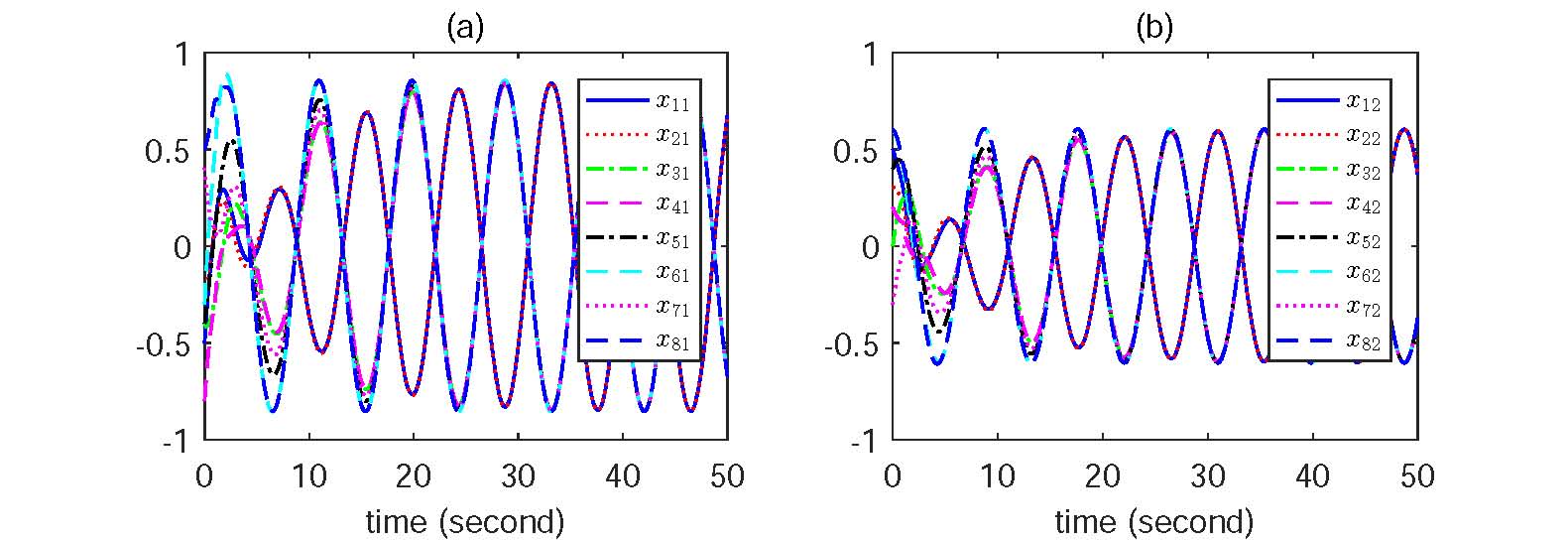}
\caption{Equal polarization of opinion dynamics model consisting of
(\ref{sys}) and (\ref{u4}) in the coopetitive community.}%
\label{fig15}%
\end{figure}
\begin{figure}
\centering
\includegraphics[scale=0.65]{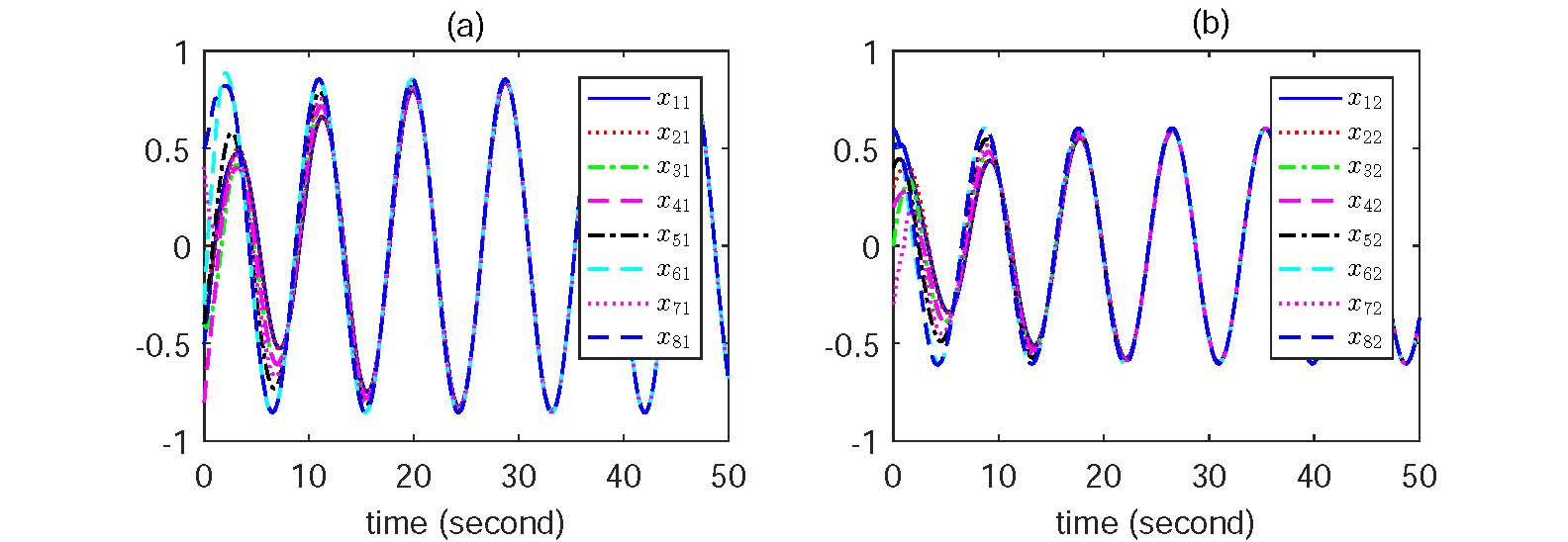}
\caption{Consensus of opinion dynamics model consisting of (\ref{sys}) and
(\ref{u4}) in the cooperative community.}%
\label{fig13}%
\end{figure}
\begin{figure}
\centering
\includegraphics[scale=0.65]{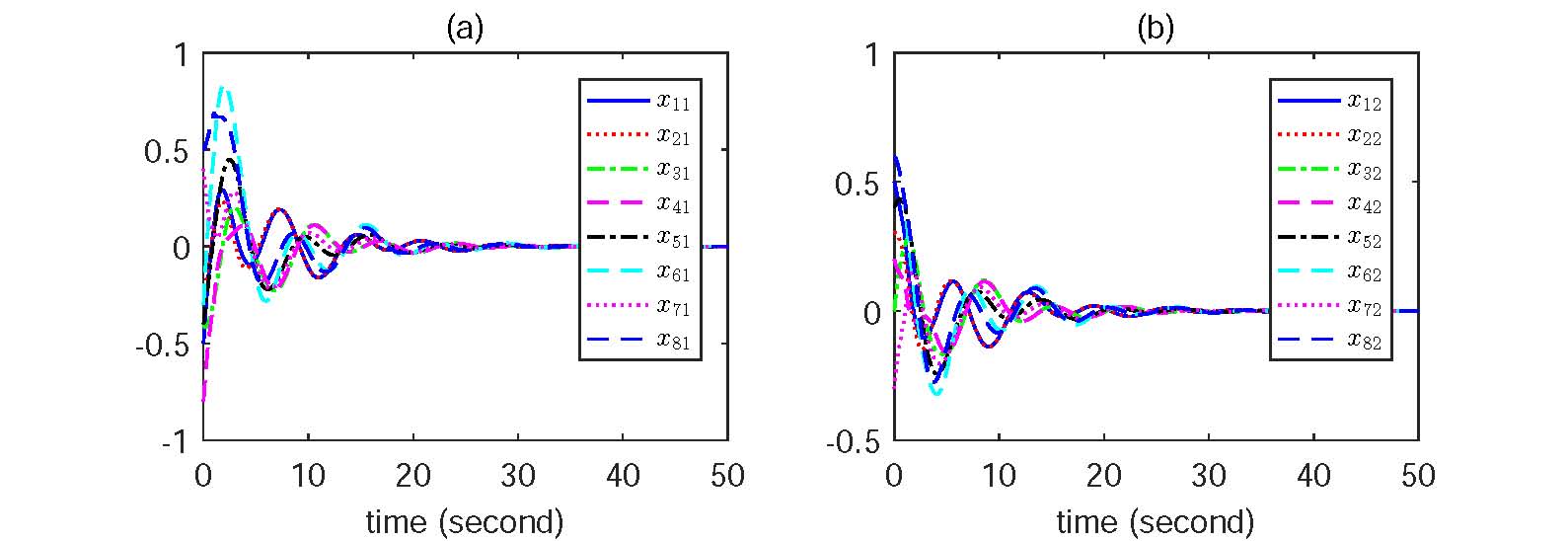}
\caption{Neutralization of opinion dynamics model consisting of (\ref{sys})
and (\ref{u4}).}%
\label{fig14}%
\end{figure}

\section{Simulation Analysis}\label{sec6}

In this section, two numerical examples are worked out to support the
obtained theoretical results and analyse the influences of the memory capacity. Consider a paradigmatic
social community consisting of eight individuals, which is shown in Fig. \ref{fig2}.

{\bf Example 1}: This example demonstrates the evolution of the individual
opinions on different topics under the proposed memoryless communication rule (\ref{u4}). Matrix
$A$ is given by
\[
A=\left[
\begin{array}
[c]{cc}%
0 & 1\\
-0.5 & 0
\end{array}
\right]  .
\]
Let $h=1$ and $\sigma=0.6$. Then, we have%
\[
\Lambda=\left[
\begin{array}
[c]{cc}%
1 & -1\\
0.5 & 1
\end{array}
\right]  ,\mathcal{A}=\left[
\begin{array}
[c]{cc}%
0.7204 & -1.2716\\
0.6358 & 0.7204
\end{array}
\right]  .
\]
We choose $\varrho=1/0.2>\max_{i\in \mathbf{I}[2,N]}\{1/\operatorname{Re}\{
\lambda_{i}\} \}$, by solving the parametric ARE (\ref{eq3}), and with the
help of (\ref{u4}), we have%
\[
F=(1/0.2)\left[
\begin{array}
[c]{cc}%
-0.0688 & -0.1336\\
0.1224 & -0.1525
\end{array}
\right],
\]
where $\gamma=0.2$, in which $\gamma^*$ can be estimated by Remark \ref{remark3}.

\begin{figure}
\centering
\includegraphics[scale=0.75]{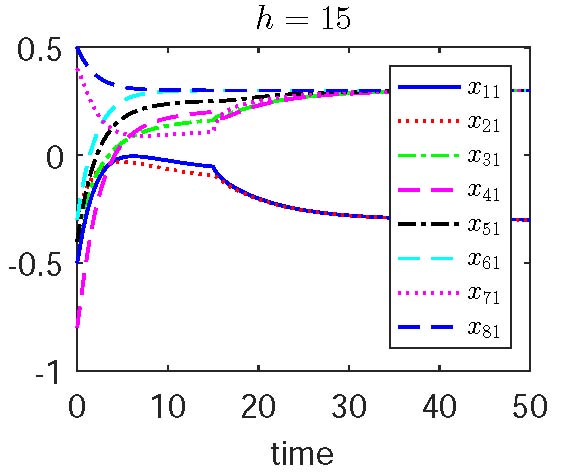}\includegraphics[scale=0.75]{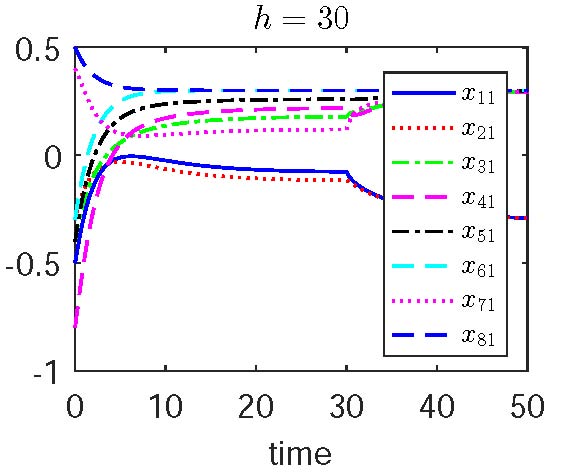}
\caption{Opinions formation processes are affected by $h$ (Equal
polarization).}%
\label{fig9}%
\end{figure}
\begin{figure}
\centering
\includegraphics[scale=0.75]{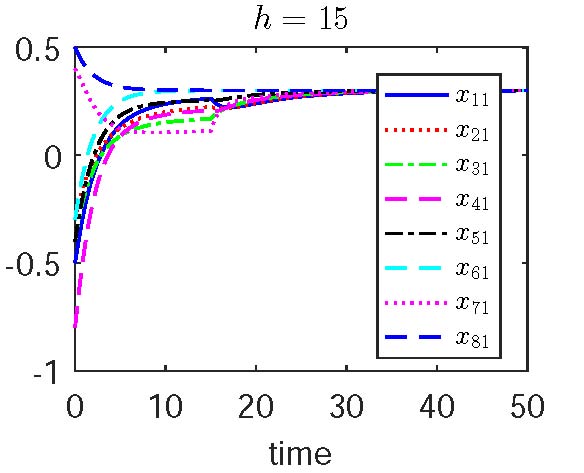}\includegraphics[scale=0.75]{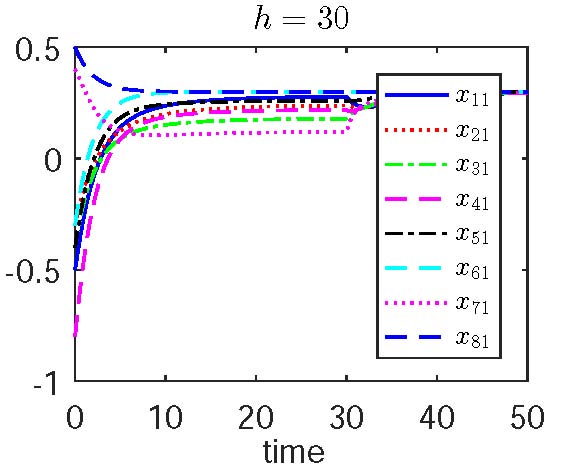}
\caption{Consensus formation processes are affected by $h$.}%
\label{fig10}%
\end{figure}
\begin{figure}
\centering
\includegraphics[scale=0.75]{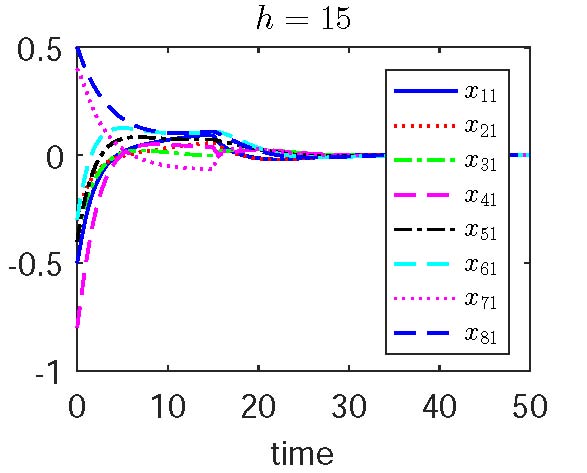}\includegraphics[scale=0.75]{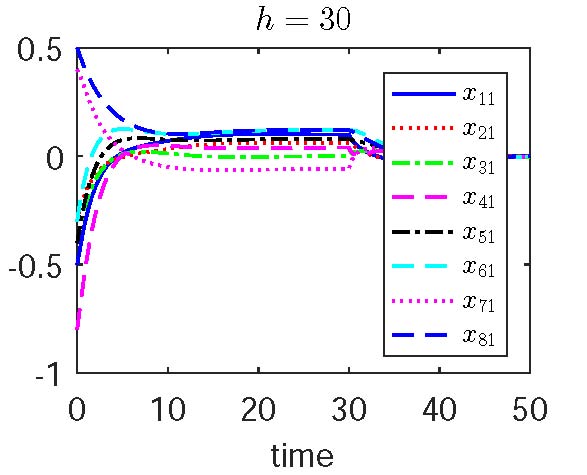}
\caption{Opinions formation processes are affected by $h$ (Neutralization).}%
\label{fig11}%
\end{figure}

{\bf Example 2}: This example studies the evolution of opinions
affected by the memory capacity $h$ under the proposed memory communication rule (\ref{u1}). For convenience of
description, we only consider the influence of the memory capacity on topic
$1$. Let $h=15$ (half a month) and $h=30$ (a month). The corresponding
matrices $\mathcal{A}$ are respectively computed as%
\begin{equation*}
\mathcal{A}=\left[
\begin{array}
[c]{cc}%
1.0857 & -0.5424\\
0 & 0.0010
\end{array}
\right]  \times10^{3},
\end{equation*}
and
\begin{equation*}
\mathcal{A}=\left[
\begin{array}
[c]{cc}%
1.9614 & -0.9807\\
0 & 0
\end{array}
\right]  \times10^{6}.
\end{equation*}
Let the initial opinions be the same as that in Section \ref{sec5}. The process of opinion evolution are shown in Fig. \ref{fig9}, Fig.
\ref{fig10} and Fig. \ref{fig11}, by which, it is clear to see that the memory capacity $h$ of the individuals
is inversely proportional to the speeds of the ultimate opinions formation. Moreover, we can see that, if the memory capacity $h$ is contained in both the opinion update mechanism and the communication rule (process), then the opinions is stationary. If the memory capacity $h$ is contained only in the opinion update mechanism, then the opinions is nonstationary (see Fig. \ref{fig15}, Fig. \ref{fig13} and Fig. \ref{fig14} for details).

In addition, although the coopetitive social network
$\mathcal{G}$ is structurally
balanced and quasi-strongly connected. Under the memory communication rules, it is interesting to see from Fig. \ref{fig7} that the individuals' opinions are neutralizable if $x_{8}\left(  0\right)=\left[0.5,0.6\right]  ^{\mathrm{T}}$ is replaced by $x_{8}\left(  0\right)  =\left[0.5,-0.6\right]  ^{\mathrm{T}}$, where $h=1$.
\begin{figure}
\centering
\includegraphics[scale=0.65]{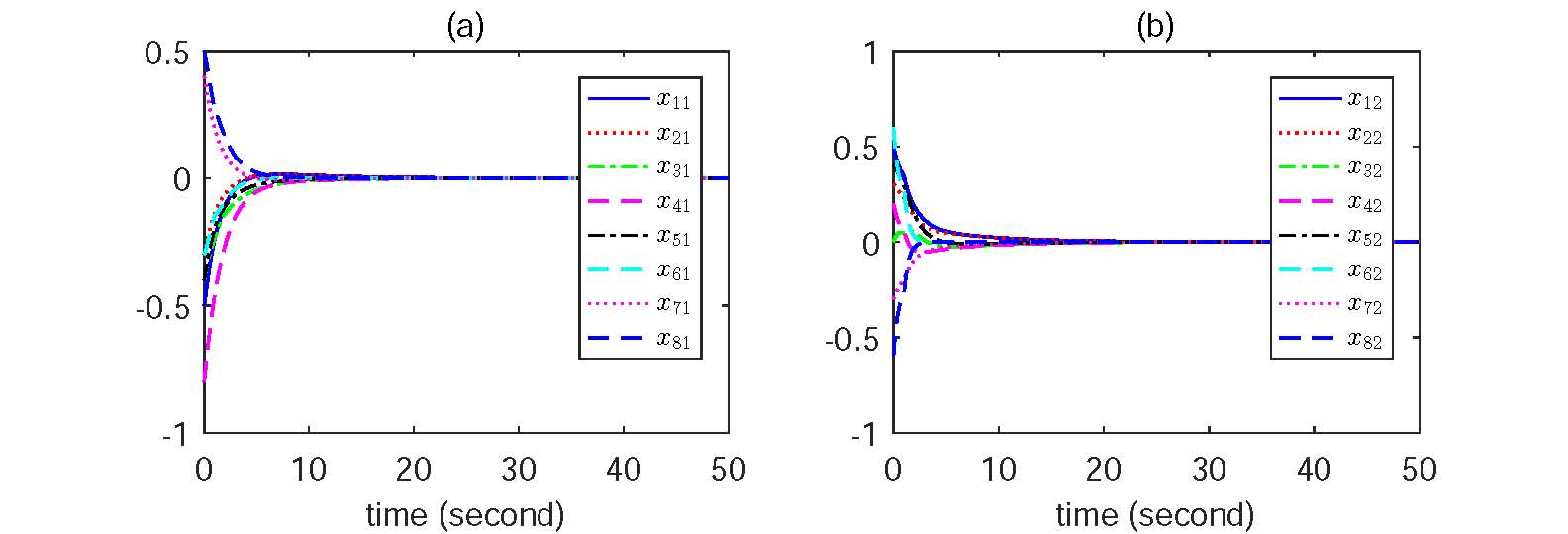}
\caption{Neutralization of opinion dynamics model consisting of (\ref{sys}) and (\ref{u1}).}%
\label{fig7}%
\end{figure}

\section{Conclusion}\label{sec7}

This paper proposed a novel model of opinion dynamics on coopetitive social
networks to describe the evolution of the individuals' opinions depending on its
own and neighbors' current opinions and past opinions, wherein the cooperative
and competitive interactions coexist. Memory and memoryless
communication rules have been given to analysis the evolution of opinions in
the coopetitive social networks, respectively. Sufficient and/or necessary
conditions guaranteeing the equal polarization, consensus and
neutralizability of the opinions were obtained on the basis of the network
topological structure and the spectral analysis. We use the proposed model to revisit Kahneman's seminal experiments on choices in risky and riskless contexts, and to reveal the insightful social
interpretations. By the simulation analysis,
it was shown that the memory capacity of the individuals was inversely
proportional to the speeds of the ultimate opinions formational. In addition, the opinions of all the individuals on different
topics were stationary under the memory communication rule and the opinions of
all the individuals on different topics were nonstationary under the
memoryless communication rule.

In the future works, we will consider the number of the individuals coming from a camp by the corresponding Laplacian matrix.
Moreover, the heterogeneous opinion dynamics model is another interesting topic and a system of more nodes will be discussed in our future study.
%evolution of opinions in the case of nonlinear opinion dynamics in the
%coopetitive social networks.
\section*{Acknowledgments}
The authors are very grateful
to the Associate Editor and the anonymous reviewers for their
comments which have helped to improve the quality of the paper
a lot. This work is supported by National
Natural Science Foundation of China under grant numbers 61903282 and 61625305,
and by China Postdoctoral Science Foundation funded project under grant number
2020T130488.

\section*{Appendix}

{\bf A1. Proof of Theorem \ref{theorem1}}

According to (\ref{x1}) and (\ref{v1}), (\ref{u2}) can be represented as
\begin{align}
v_{i}\left(  t\right)  =&-F\sum\limits_{j=1}^{N}l_{ij}\bigg(  \chi
_{j}\left(  t\right) +(1-\sigma)\nonumber \\
&+\int_{t-h}^{t}\mathrm{e}^{A\left(
t-h-s\right)  }\Lambda v_{j}\left(  s\right)  \mathrm{d}s\bigg),i\in \mathbf{I}[1,N].
\label{v2}%
\end{align}
Let $\varkappa$ and $\mu$ be defined as%
\[
\varkappa \triangleq \left[  \varkappa_{1}^{\mathrm{T}},\varkappa_{2}%
^{\mathrm{T}},\cdots,\varkappa_{N}^{\mathrm{T}}\right]  ^{\mathrm{T}%
}\triangleq \left(  \left(  DU\right)  ^{-1}\otimes I_{n}\right)  \chi,
\]
and%
\[
\mu \triangleq \left[  \mu_{1}^{\mathrm{T}},\mu_{2}^{\mathrm{T}},\cdots,\mu
_{N}^{\mathrm{T}}\right]  ^{\mathrm{T}}\triangleq \left(  \left(  DU\right)
^{-1}\otimes I_{n}\right)  v,
\]
where $\chi \triangleq \left[  \chi_{1}^{\mathrm{T}},\chi_{2}^{\mathrm{T}%
},\cdots,\chi_{N}^{\mathrm{T}}\right]  ^{\mathrm{T}}$, $v\triangleq \left[
v_{1}^{\mathrm{T}},v_{2}^{\mathrm{T}},\cdots,v_{N}^{\mathrm{T}}\right]
^{\mathrm{T}}$ and $\otimes$\ denotes the Kronecker product. The
signed graph $\mathcal{G}$ is structurally balanced and quasi-strongly
connected. Hence, by Lemma \ref{lemma2}, (\ref{sys}) and (\ref{v2}) can be
rewritten as, respectively,%
\begin{align*}
\dot{\varkappa}\left(  t\right)  =&\left(  I_{N}\otimes A\right)
\varkappa \left(  t\right)  +\left(  \sigma I_{N}\otimes \Lambda\right)  \mu \left(
t\right)  \\&+\left(  (1-\sigma)I_{N}\otimes \Lambda\right)  \mu \left(  t-h\right)  ,
\end{align*}
and%
\begin{align*}
\mu \left(  t\right)  =&-\left(  J_{\overline{L}}\otimes F\right)  \varkappa
\left(  t\right)  \\&-\left( (1-\sigma)J_{\overline{L}}\otimes F\int_{t-h}%
^{t}\mathrm{e}^{A\left(  t-h-s\right)  }\Lambda\mu \left(  s\right)  \mathrm{d}%
s\right).
\end{align*}
It follows from (\ref{j}) that $\dot{\varkappa}_{1}\left(  t\right)  = A\varkappa_{1}\left(  t\right)$ and
\begin{equation}
\left \{
\begin{array}
[c]{rl}%
\dot{\alpha}\left(  t\right)  = & \left(  I_{N-1}\otimes A\right)
\alpha \left(  t\right)  +\left(  I_{N-1}\otimes \sigma \Lambda \right)
\nu \left(  t\right)  \\
& +\left(  I_{N-1}\otimes(1-\sigma)\Lambda \right)  \nu \left(  t-h\right)  ,\\
\nu \left(  t\right)  = & -\left(  J\otimes F\right)  \varsigma \left(  t\right)
,
\end{array}
\right.  \label{sys3}%
\end{equation}
where $\alpha \triangleq \left[  \varkappa_{2}^{\mathrm{T}},\varkappa
_{3}^{\mathrm{T}},\cdots,\varkappa_{N}^{\mathrm{T}}\right]  ^{\mathrm{T}}$,
$\nu \triangleq \left[  \mu_{2}^{\mathrm{T}},\mu_{3}^{\mathrm{T}},\cdots,\mu
_{N}^{\mathrm{T}}\right]  ^{\mathrm{T}}$ and $$\varsigma \left(  t\right)
=\alpha \left(  t\right)  +(1-\sigma)\int_{-h}^{0}\mathrm{e}^{A\left(
-h-\theta \right)  }\Lambda \nu \left(  t+\theta \right)  \mathrm{d}\theta.$$ Let $\mathcal{L}\left(  f\left(  t\right)  \right)  $ denote the Laplace
transformation of the time function $f\left(  t\right)  $. Then,%
\begin{align*}
\mathcal{L}\left(  \varsigma \left(  t\right)  \right)   &  =X\left(  s\right)
+\left(  (1-\sigma)\int_{-h}^{0}\mathrm{e}^{A\left(  -h-\theta \right)
}\mathrm{e}^{s\theta}\Lambda \mathrm{d}\theta \right)  V\left(  s\right)  \\
&  =X\left(  s\right)  +\varrho V\left(  s\right)  ,
\end{align*}
where $\mathcal{L}\left(  \alpha \left(  t\right)  \right)  =X\left(  s\right)
$, $\mathcal{L}\left(  \nu \left(  t\right)  \right)  =V\left(  s\right)  $ and
$\varrho=(1-\sigma)\left(  sI_{n}-A\right)  ^{-1}\left(  \mathrm{e}%
^{-Ah}-\mathrm{e}^{-sh}\right)  \Lambda$.
Therefore, (\ref{sys3}) can be written in the frequency domain as%
\begin{equation}
\left[
\begin{array}
[c]{cc}%
\Omega_{0} & \Omega_{1}\\
 J\otimes F  & \Omega_{2}%
\end{array}
\right]  \left[
\begin{array}
[c]{c}%
X\left(  s\right)  \\
V\left(  s\right)
\end{array}
\right]  =0,\label{eq1}%
\end{equation}
where%
\[
\left \{
\begin{array}
[c]{rl}%
\Omega_{0}= & I_{N-1}\otimes \left(  sI_{n}-A\right)  ,\\
\Omega_{1}= & -I_{N-1}\otimes \left(  \sigma \Lambda+(1-\sigma)\Lambda
\mathrm{e}^{-sh}\right)  ,\\
\Omega_{2}= & I_{N-1}\otimes I_{n}+\left(  J\otimes F\varrho \right)  .
\end{array}
\right.
\]
It yields that the characteristic equation of (\ref{eq1}) is%
\[
\Delta \left(  s\right)  =\det \left[
\begin{array}
[c]{cc}%
\Omega_{0} & \Omega_{1}\\
J\otimes F& \Omega_{2}%
\end{array}
\right]  =0.
\]
Notice that%
\[
J\otimes F\left(  sI_{n}-A\right)  ^{-1}\Omega_{1}+\Omega_{2}=\Omega_{3},
\]
where $\Omega_{3}=I_{N-1}\otimes I_{n}+\left(  J\otimes F\varrho_{0}\left(
\sigma+(1-\sigma)\mathrm{e}^{-Ah}\right)  \Lambda \right)  $, in which
$\varrho_{0}=\left(  sI_{n}-A\right)  ^{-1}$. It
follows that%
\begin{align*}
\Delta \left(  s\right)   &  =\det \left(  \left[
\begin{array}
[c]{cc}%
I_{N-1}\otimes I_{n} & 0\\
-\left(J\otimes F\varrho_{0}\right) & I_{N-1}\otimes I_{n}%
\end{array}
\right]  \left[
\begin{array}
[c]{cc}%
\Omega_{0} & \Omega_{1}\\
  J\otimes F   & \Omega_{2}%
\end{array}
\right]  \right)  \\
&  =\det \left[
\begin{array}
[c]{cc}%
\Omega_{0} & \Omega_{1}\\
0 & \Omega_{3}%
\end{array}
\right]  \\
&  =\det \left(  \Omega_{0}\right)  \det \left(  \Omega_{3}\right)  \\
&  =\det \left(  \Omega_{0}\left(  I_{N-1}\otimes I_{n}+\left(  J\otimes
\varrho_{0}\mathcal{A} F\right)  \right)  \right)  \\
&  =\det \left(  I_{N-1}\otimes \left(  sI_{n}-A\right)+J\otimes
\mathcal{A}F\right)  \\
&  =\prod_{i=2}^{N}\left \vert sI_{n}-\left(A-\lambda_{i}\mathcal{A}F\right)\right \vert ,
\end{align*}
where we have used $\det(I_n+CE)=\det(I_n+EC)$ with $C$ and $E$ having appropriate dimensions.
In view of $A+\lambda_{i}\mathcal{A}F$ are Hurwitz, then we have%
\begin{equation}
\lim_{t\rightarrow \infty}\varkappa_{i}\left(  t\right)  =0,i\in \mathbf{I}%
[2,N].\label{xx1}%
\end{equation}
It follows from Lemma \ref{lemma2}, (\ref{j}) and (\ref{x1}) that
\begin{equation}
\varkappa \left(  t\right)  =J_{\overline{L}}\varphi \left(  t\right)  ,
\label{xx2}%
\end{equation}
where $\varphi \triangleq \left[  \varphi_{1}^{\mathrm{T}},\varphi
_{2}^{\mathrm{T}},\cdots,\varphi_{N}^{\mathrm{T}}\right]  ^{\mathrm{T}%
}\triangleq \left(  \left(  DU\right)  ^{-1}\otimes I_{n}\right)  x$, in which,
$x\triangleq \left[  x_{1}^{\mathrm{T}},x_{2}^{\mathrm{T}},\cdots
,x_{N}^{\mathrm{T}}\right]  ^{\mathrm{T}}$. It yields from (\ref{j}) and
(\ref{xx2}) that $\varkappa_{1}\left(  t\right)  =0$ and%
\[
\varkappa_{i}\left(  t\right)  =\lambda_{i}\varphi_{i}\left(  t\right)
+\epsilon_{i}\varphi_{i+1}\left(  t\right)  ,i\in \mathbf{I}[2,N],
\]
where $\varphi_{N+1}\left(  t\right)  =0$. By (\ref{xx1}), we can further get%
\[
\lim_{t\rightarrow \infty}\varphi_{i}\left(  t\right)  =0,i\in \mathbf{I}[2,N].
\]
Since the first column of $U$ is $\mathbf{1}_{N}$ and $x\left(  t\right)
=\left(  DU\otimes I_{n}\right)  \varphi \left(  t\right)  =\left(  U\otimes
I_{n}\right)  \left(  D\otimes I_{n}\right)  \varphi \left(  t\right)  $. Then,
\[
\lim_{t\rightarrow \infty}\left \Vert x_{i}\left(  t\right)  -d_i\varphi
_{1}\left(  t\right)  \right \Vert =0,i\in \mathbf{I}[1,N],
\]which means that the individuals' opinions are polarizable. We point out that
the condition $A\in \mathcal{S}$ is such that $\lim_{t\rightarrow \infty}%
x_{i}\left(  t\right)  =c$ for nonzero initial opinions, where $\left \vert
c\right \vert \neq0$ is a bounded real number.

{\bf A2. Proof of Theorem \ref{theorem3}}

By virtue of the proof of Theorem \ref{theorem1}, the opinion dynamics network
consisting of dynamic model (\ref{sys}) and the memory communication rule
(\ref{u1}) can be written as%
\begin{align}
\dot{x}\left(  t\right)  =&\big(  I_{N}\otimes A-\sigma L\otimes\Lambda F \nonumber\\&-\left(
1-\sigma \right)  L\otimes \mathrm{e}^{-Ah}\Lambda F\big)  x\left(  t\right)  ,
\label{sys4}%
\end{align}
by which, it is clear to see that the individuals' opinions in the coopetitive
social networks are neutralizable if and only if%
\begin{align*}
&\operatorname{Re} \left\{ \left(  I_{N}\otimes A-\sigma L\otimes \Lambda F-\left(
1-\sigma \right)  L\otimes \mathrm{e}^{-Ah}\Lambda F\right) \right \}\\=&\operatorname{Re}\{
\left(  A-\varepsilon \mathcal{A}F\right)\}<0,
\end{align*}
where $\varepsilon \in \lambda \left(  L\right)  $, in which, $\lambda \left(
L\right)  $ denotes the eigenvalue of $L$.

Sufficiency: Since $\mathcal{G}$ does not involve an in-isolated structurally
balanced subgraph, it follows that $0\notin \lambda \left(  L\right)  $. In
addition, notice that $\left(  A,\mathcal{A}\right)  $ is
controllable, then there exists a opinion adjustment matrix $F$ such that
$A-\varepsilon \mathcal{A}F,\forall \varepsilon \in \lambda \left(  L\right)  $ is
Hurwitz. Therefore, the individuals' opinions are neutralizable.

Necessity: If individuals' opinions are neutralizable, then we have
$\operatorname{Re}\{A-\varepsilon \mathcal{A}F\}<0,\forall \varepsilon \in
\lambda \left(  L\right)  $. We assume that $\mathcal{G}$\textit{ }is
structurally balanced or involves an in-isolated structurally balanced
subgraph, which leads to $0\in \lambda \left(  L\right)  $. However,
$\operatorname{Re}\{A-\varepsilon \mathcal{A}F\}<0$ for $\varepsilon=0$ implies
$A$ is a Hurwitz matrix, which is contradictory to the fact that $A$ is not
Hurwitz. Hence, we derive the necessity that $\mathcal{G}$ does not involve an
in-isolated structurally balanced subgraph. By adopting the same method
presented in \cite{mz10tac} to display that $\left(  A,\mathcal{A}\right)  $ is
controllable.

{\bf A3. A Useful Lemma}
\begin{lemma}
\label{lemma3}Let $F=-\mathcal{A}^{\mathrm{T}}P\left(  \gamma \right)  $ and
$P\left(  \gamma \right)  $ be unique positive definite solution to the
parametric ARE
\begin{equation}
A^{\mathrm{T}}P+PA-P\mathcal{A}\mathcal{A}^{\mathrm{T}}P=-\gamma P.
\label{addeq3}%
\end{equation}
Then there exists a scalar $\gamma^{\ast}>0$ such that the following
closed-loop system%
\begin{equation}
\left \{
\begin{array}
[c]{rl}%
\dot{\eta}\left(  t\right)  = & A\eta \left(  t\right)  +\sigma \Lambda u\left(
t\right)  +(1-\sigma)\Lambda u\left(  t-h\right)  ,\\
u\left(  t\right)  = & \mathcal{F}\eta \left(  t\right)  ,
\end{array}
\right.  \label{delaysystem}%
\end{equation}
is asymptotically stable for all $\gamma \in \left(  0,\gamma^{\ast}\right)  $,
where $\mathcal{F}=\lambda \varrho F$, $\varrho \geq1/\operatorname{Re}\{
\lambda \}$ and $\operatorname{Re}\{ \lambda \}>0$.
\end{lemma}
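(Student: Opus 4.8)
The plan is a low-gain Lyapunov--Krasovskii argument built on the finite-dimensional reduction already used in Section~\ref{sec3}. First I record the facts about the parametric ARE~(\ref{addeq3}) that drive everything: under Assumption~\ref{assumption1} (controllability of $(A,\mathcal{A})$ and $\lambda(A)\subset\iota\mathbf{R}$), for every $\gamma>0$ it has a unique positive definite solution $P(\gamma)$, which is continuously differentiable, monotonically increasing, and satisfies $\lim_{\gamma\downarrow0}P(\gamma)=0$; in fact $\|P(\gamma)\|\le c_{0}\gamma$ on some interval $(0,\gamma_{0}]$, so that $\mathcal{F}=\lambda\varrho F=-\lambda\varrho\mathcal{A}^{\mathrm{T}}P(\gamma)$ obeys $\|\mathcal{F}\|\le c_{1}\gamma$ there, consistently with~(\ref{eq2}). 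I shall also use a two-sided control on the spectrum of $P(\gamma)$, namely a lower bound $\lambda_{\min}(P(\gamma))\ge c'\gamma^{m}$ for suitable $c',m$, in order to pass between Euclidean and $P$-weighted norms without destroying the $\gamma$-margin.

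Next, mimicking~(\ref{z}), set $\zeta(t)=\eta(t)+(1-\sigma)\int_{t-h}^{t}\mathrm{e}^{A(t-h-s)}\Lambda\mathcal{F}\eta(s)\,\mathrm{d}s$. Differentiating and using~(\ref{delaysystem}) gives $\dot\zeta(t)=A\zeta(t)+\mathcal{A}\mathcal{F}\eta(t)$ with $\mathcal{A}$ as in~(\ref{b}); substituting $\eta(t)=\zeta(t)-(1-\sigma)\int_{t-h}^{t}\mathrm{e}^{A(t-h-s)}\Lambda\mathcal{F}\eta(s)\,\mathrm{d}s$ back in yields
\[
\dot\zeta(t)=(A+\mathcal{A}\mathcal{F})\zeta(t)+g(t),\qquad g(t)=-(1-\sigma)\,\mathcal{A}\mathcal{F}\!\int_{t-h}^{t}\mathrm{e}^{A(t-h-s)}\Lambda\mathcal{F}\eta(s)\,\mathrm{d}s .
\]
Since $\mathrm{e}^{A\tau}$ is bounded for $\tau\in[-h,0]$ and $\|\mathcal{F}\|=O(\gamma)$, the residual carries the gain twice: $\|g(t)\|^{2}\le c_{2}\gamma^{4}h\int_{t-h}^{t}\|\eta(s)\|^{2}\,\mathrm{d}s$ by Cauchy--Schwarz, and the same identity gives $\sup_{[t-h,t]}\|\eta\|\le 2\sup_{[t-h,t]}\|\zeta\|$ once $\gamma$ is small. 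Using~(\ref{addeq3}) and $\mathcal{F}=-\lambda\varrho\mathcal{A}^{\mathrm{T}}P$ one checks
\[
(A+\mathcal{A}\mathcal{F})^{*}P+P(A+\mathcal{A}\mathcal{F})=-\gamma P-(2\varrho\operatorname{Re}\{\lambda\}-1)\,P\mathcal{A}\mathcal{A}^{\mathrm{T}}P\preceq-\gamma P ,
\]
the last step using $\varrho\ge 1/\operatorname{Re}\{\lambda\}$, hence $2\varrho\operatorname{Re}\{\lambda\}-1\ge1>0$; in particular $A+\mathcal{A}\mathcal{F}$ is Hurwitz (complex $\lambda$, hence complex $\mathcal{F}$, are admissible, as in the modal decomposition of Section~\ref{sec3}).

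With the functional $W(\zeta_{t})=\zeta^{*}(t)P\zeta(t)+\mu\int_{t-h}^{t}\zeta^{*}(s)P\zeta(s)\,\mathrm{d}s$, the Lyapunov inequality above gives $\frac{\mathrm{d}}{\mathrm{d}t}[\zeta^{*}P\zeta]\le-\gamma\,\zeta^{*}(t)P\zeta(t)+2\operatorname{Re}(\zeta^{*}(t)Pg(t))$. I bound $2\operatorname{Re}(\zeta^{*}Pg)$ by Young's inequality, insert the estimate $\|g(t)\|^{2}\le c_{2}\gamma^{4}h\int_{t-h}^{t}\|\eta(s)\|^{2}\,\mathrm{d}s$, convert $\|\eta\|$-norms into $P$-weighted ones via the parametric-ARE spectral bounds, and control $\int_{t-h}^{t}\zeta$ through the Jensen-type inequality of Lemma~\ref{lemmaadd}; every term so produced acquires an extra power of $\gamma$ beyond the decay term $-\gamma\,\zeta^{*}(t)P\zeta(t)$ and beyond $\mu\,\zeta^{*}(t)P\zeta(t)$. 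Choosing $\mu=\gamma/2$ and then $\gamma^{\ast}\in(0,\gamma_{0}]$ small enough to absorb all the residuals, I obtain $\dot W\le-\tfrac{\gamma}{4}\zeta^{*}(t)P\zeta(t)\le-cW$ for some $c>0$, so $\zeta(t)\to0$ exponentially. Finally, for $\gamma<\gamma^{\ast}$ the identity $\eta(t)=\zeta(t)-(1-\sigma)\int_{t-h}^{t}\mathrm{e}^{A(t-h-s)}\Lambda\mathcal{F}\eta(s)\,\mathrm{d}s$ is an $O(\gamma)$ Volterra perturbation of $\eta=\zeta$; a Gr\"onwall/small-gain estimate transfers the decay of $\zeta$ to $\eta$, which proves that~(\ref{delaysystem}) is asymptotically stable for all $\gamma\in(0,\gamma^{\ast})$.

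The main obstacle is that the stabilising feedback is itself only of order $\gamma$, so a priori the delay residual $g(t)$ and the nominal decay $-\gamma\zeta^{*}P\zeta$ live on the same scale in $\gamma$; the estimate closes only because, after the $\zeta$-reduction, $g$ contains $\mathcal{F}$ quadratically (an extra factor $\gamma$) and $\mathrm{e}^{A\tau}$ stays bounded on $[-h,0]$ uniformly in $h$. Making this quantitative forces one to track how the $\gamma$-dependent constants in the Euclidean/$P$-norm equivalences behave, which is precisely where the two-sided parametric-ARE bounds $\lambda_{\min}(P(\gamma))\ge c'\gamma^{m}$ and $\|P(\gamma)\|\le c_{0}\gamma$ enter, and where uniformity in $\gamma\in(0,\gamma^{\ast})$ and in the data $(\varrho,h,\lambda)$ must be verified.
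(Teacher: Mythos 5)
Your proposal is correct and follows essentially the same route as the paper's proof: the Artstein-type reduction $z=\eta+\psi$, the Lyapunov matrix $P(\gamma)$ from the parametric ARE with the bound $P\mathcal{A}\mathcal{A}^{\mathrm{T}}P\leq n\gamma P$, a Lyapunov--Krasovskii functional with delay-integral terms, Jensen's inequality (Lemma \ref{lemmaadd}), and the key observation that the delay residual carries $\mathcal{F}$ twice and is therefore $O(\gamma^{2})$ against the $O(\gamma)$ decay. The only cosmetic difference is that you first establish exponential decay of $\zeta$ and then transfer it to $\eta$ by a Gr\"onwall/small-gain step, whereas the paper keeps the integral functionals in terms of $\eta^{\mathrm{T}}P\eta$ and bounds $V$ below by $\tfrac{1}{2}\delta\gamma^{a}\left\Vert \eta\right\Vert ^{2}$ directly.
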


\begin{proof}
Let
\begin{align}
z\left(  t\right)   &  =\eta \left(  t\right)  +(1-\sigma)\int_{t-h}%
^{t}\mathrm{e}^{A\left(  t-h-s\right)  }\Lambda u\left(  s\right)  \mathrm{d}%
s\nonumber \\
&  \triangleq \eta \left(  t\right)  +\psi \left(  t\right)  . \label{eq8}%
\end{align}
It then follows from (\ref{delaysystem}) and (\ref{eq8}) that%
\begin{equation}
\dot{z}\left(  t\right)  =Az\left(  t\right)  +\mathcal{A}u\left(  t\right)  ,
\label{eq9}%
\end{equation}
where $\mathcal{A}=\sigma\Lambda+\left(  1-\sigma\right)  \mathrm{e}^{-Ah}\Lambda$.
It yields from (\ref{eq9}),
(\ref{eq8}) and the second equation in (\ref{delaysystem}) that%
\[
\dot{z}\left(  t\right)  =\left(  A+\mathcal{AF}\right)  z\left(  t\right)
-\mathcal{AF}\psi \left(  t\right)  .
\]
We now consider the Lyapunov function $V_{1}\left(  z\left(  t\right)
\right)  =z^{\mathrm{T}}\left(  t\right)  Pz\left(  t\right)  $. Therefore,
for any given $\gamma \in \left(  0,\gamma_{0}\right)  $, where $\gamma_{0}>0$,
we get%
\begin{align}
 &\dot{V}_{1}\left(  z\left(  t\right)  \right) =z^{\mathrm{T}}\left(
t\right)  \left(  \left(  A+\mathcal{AF}\right)  ^{\mathrm{T}}P+P\left(
A+\mathcal{AF}\right)  \right)  z\left(  t\right)  \nonumber \\
&  -2z^{\mathrm{T}}\left(  t\right)  P\mathcal{AF}\psi \left(  t\right)
\nonumber \\
=& z^{\mathrm{T}}\left(  t\right) \big (P\mathcal{AA}^{\mathrm{T}}P-\lambda
\varrho P\mathcal{AA}^{\mathrm{T}}P-\gamma P -\lambda \varrho P\mathcal{AA}^{\mathrm{T}}P\big)z\left(  t\right)
+2\lambda \varrho z^{\mathrm{T}}\left(  t\right)  P\mathcal{AA}^{\mathrm{T}%
}P\psi \left(  t\right)  \nonumber \\
\leq & z^{\mathrm{T}}\left(  t\right)  \left(  -\left \Vert \lambda \right \Vert
\varrho P\mathcal{AA}^{\mathrm{T}}P-\gamma P\right)  z\left(  t\right)
 +\left \Vert \lambda \right \Vert \varrho z^{\mathrm{T}}\left(  t\right)  P\mathcal{AA}^{\mathrm{T}%
}Pz\left(  t\right)  +\left \Vert \lambda \right \Vert \varrho \psi^{\mathrm{T}}\left(  t\right)  P\mathcal{AA}%
^{\mathrm{T}}P\psi \left(  t\right)  \nonumber \\
\leq &  -\gamma z^{\mathrm{T}}\left(  t\right)  Pz\left(  t\right)
+\gamma \left \Vert \lambda \right \Vert \varrho n\psi^{\mathrm{T}}\left(
t\right)  P\psi \left(  t\right)  ,\label{eq10}%
\end{align}
where we have used $P\mathcal{AA}^{\mathrm{T}}P\leq n\gamma P$. By virtue of%
\begin{align*}
&z^{\mathrm{T}}\left(  t\right)  Pz\left(  t\right) \\ =&\eta^{\mathrm{T}}\left(
t\right)  P\eta \left(  t\right)  +2\eta^{\mathrm{T}}\left(  t\right)
P\psi \left(  t\right)  +\psi^{\mathrm{T}}\left(  t\right)  P\psi \left(
t\right)  ,
\end{align*}
and
\begin{align}
& \frac{1}{2}\gamma \eta^{\mathrm{T}}\left(  t\right)  P\eta \left(  t\right)
+2\gamma \psi^{\mathrm{T}}\left(  t\right)  P\psi \left(  t\right)  \nonumber \\
=&\left(  \sqrt{\frac{\gamma}{2}}P^{\frac{1}{2}}\eta \left(  t\right)
\right)  ^{\mathrm{T}}\left(  \sqrt{\frac{\gamma}{2}}P^{\frac{1}{2}}%
\eta \left(  t\right)  \right)   +\left(  \sqrt{2\gamma}P^{\frac{1}{2}}\psi \left(  t\right)  \right)
^{\mathrm{T}}\left(  \sqrt{2\gamma}P^{\frac{1}{2}}\psi \left(  t\right)
\right)  \nonumber \\
\geq&2\left(  \sqrt{\frac{\gamma}{2}}P^{\frac{1}{2}}\eta \left(  t\right)
\right)  ^{\mathrm{T}}\left(  \sqrt{2\gamma}P^{\frac{1}{2}}\psi \left(
t\right)  \right)  \nonumber \\
=& 2\gamma \eta^{\mathrm{T}}\left(  t\right)  P\psi \left(  t\right),
\label{addeq1}%
\end{align}
it follows from (\ref{eq10}) and (\ref{addeq1}) that
\begin{align}
\dot{V}_{1}\left(  z\left(  t\right)  \right)  \leq & -\gamma \eta
^{\mathrm{T}}\left(  t\right)  P\eta \left(  t\right)  -\gamma \psi^{\mathrm{T}%
}\left(  t\right)  P\psi \left(  t\right)   +\frac{1}{2}\gamma \eta^{\mathrm{T}}\left(  t\right)  P\eta \left(  t\right)
+2\gamma \psi^{\mathrm{T}}\left(  t\right)  P\psi \left(  t\right) \nonumber \\& +\gamma \left \Vert \lambda \right \Vert \varrho n\psi^{\mathrm{T}}\left(  t\right)  P\psi \left(
t\right)  \nonumber \\
=& -\frac{1}{2}\gamma \eta^{\mathrm{T}}\left(  t\right)  P\eta \left(
t\right)    +\left(  1+\left \Vert \lambda \right \Vert \varrho n\right)  \gamma \psi^{\mathrm{T}}\left(
t\right)  P\psi \left(  t\right)  .\label{eq12}%
\end{align}
According to Lemma \ref{lemmaadd}, we have%
\begin{align}
\psi^{\mathrm{T}}\left(  t\right)  P\psi \left(  t\right)=&
(1-\sigma)^{2}\left(  \int_{t-h}^{t}\mathrm{e}^{A\left(  t-h-s\right)
}\Lambda u\left(  s\right)  \mathrm{d}s\right)  ^{\mathrm{T}} P\left(  \int_{t-h}^{t}\mathrm{e}^{A\left(  t-h-s\right)  }\Lambda
u\left(  s\right)  \mathrm{d}s\right)  \nonumber \\
\leq &  (1-\sigma)^{2}h\int_{t-h}^{t}u^{\mathrm{T}}\left(  s\right)
\Lambda^{\mathrm{T}}\mathrm{e}^{A^{T}\left(  t-h-s\right)  } P\mathrm{e}^{A\left(  t-h-s\right)  }\Lambda u\left(  s\right)
\mathrm{d}s\nonumber \\
\leq & (1-\sigma)^{2}h\delta_{0}\int_{t-h}^{t}u^{\mathrm{T}}\left(
s\right)  u\left(  s\right)  \mathrm{d}s,\label{eq11}%
\end{align}
where $\delta_{0}=\left \Vert \Lambda\right \Vert ^{2}\cdot \left \Vert \mathrm{e}%
^{Ah}\right \Vert ^{2}\left \Vert P \right \Vert$. We note that%
\begin{align*}
u^{\mathrm{T}}\left(  s\right)  u\left(  s\right)   &  =\left( \lambda
\varrho \right)  ^{2}\eta^{\mathrm{T}}\left(  s\right)  P\mathcal{AA}%
^{\mathrm{T}}P\eta \left(  s\right) \\
&  \leq \left \Vert \lambda\right \Vert^{2}\varrho^{2}n\gamma \eta^{\mathrm{T}}\left(
s\right)  P\eta \left(  s\right)  .
\end{align*}
Hence, equation (\ref{eq11}) can be written as%
\begin{equation}
\psi^{\mathrm{T}}\left(  t\right)  P\psi \left(  t\right)  \leq \delta_{1}%
\gamma\int_{t-h}^{t}\eta^{\mathrm{T}}\left(  s\right)  P\eta \left(
s\right)  \mathrm{d}s, \label{eq13}%
\end{equation}
where $\delta_{1}=(1-\sigma)^{2}\left \Vert \lambda \right \Vert ^{2}\varrho
hn\delta_{0}$. Substituting (\ref{eq13}) into (\ref{eq12}) gives%
\[
\dot{V}_{1}\left(  z\left(  t\right)  \right)  \leq-\frac{1}{2}\gamma
\eta^{\mathrm{T}}\left(  t\right)  P\eta \left(  t\right)  +\delta_{2}\left(
\gamma \right)  \int_{t-h}^{t}\eta^{\mathrm{T}}\left(  s\right)  P\eta \left(
s\right)  \mathrm{d}s,
\]
where $\delta_{2}\left(  \gamma \right)  =\left(  1+\left \Vert \lambda
\right \Vert \varrho n\right)  \delta_{1}\gamma^{2}$. Now, we choose another
two Lyapunov functional as%
\[
V_{2}\left(  \eta_{t}\right)  =\delta_{2}\left(  \gamma \right)  \int_{0}%
^{h}\left(  \int_{t-s}^{t}\eta^{\mathrm{T}}\left(  l\right)  P\eta \left(
l\right)  \mathrm{d}l\right)  \mathrm{d}s,
\]
and%
\[
V_{3}\left(  \eta_{t}\right)  =2\delta_{1}\gamma^{2}\int_{t-h}^{t}%
\eta^{\mathrm{T}}\left(  s\right)  P\eta \left(  s\right)  \mathrm{d}s.
\]
It follows that%
\begin{align*}
\dot{V}\left(  \eta_{t}\right)  = &  \dot{V}_{1}\left(  z\left(  t\right)
\right)  +\dot{V}_{2}\left(  \eta_{t}\right)  +\dot{V}_{3}\left(  \eta
_{t}\right)  \\
\leq &  -\frac{1}{2}\gamma \eta^{\mathrm{T}}\left(  t\right)  P\eta \left(
t\right)  +\delta_{2}\left(  \gamma \right)  \int_{t-h}^{t}\eta^{\mathrm{T}%
}\left(  s\right)  P\eta \left(  s\right)  \mathrm{d}s +\delta_{2}\left(  \gamma \right)  h\eta^{\mathrm{T}}\left(  t\right)
P\eta \left(  t\right)  +2\delta_{1}\gamma^{2}\eta^{\mathrm{T}}\left(
t\right)  P\eta \left(  t\right)  \\
&  -\delta_{2}\left(  \gamma \right)  \int_{0}^{h}\eta^{\mathrm{T}}\left(
t-s\right)  P\eta \left(  t-s\right)  \mathrm{d}s -2\delta_{1}\gamma^{2}\eta^{\mathrm{T}}\left(  t-h\right)  P\eta \left(
t-h\right)  \\
\leq &  -\left(  \frac{1}{2}-h\left(  1+\left \Vert \lambda \right \Vert \varrho
n\right)  \delta_{1}\gamma-2\delta_{1}\gamma \right)  \gamma \eta^{\mathrm{T}}\left(  t\right)  P\eta \left(  t\right)  .
\end{align*}
Let $\gamma^{\ast}\in \left(  0,\gamma_{0}\right)  $ be such that%
\[
\frac{1}{2}-h\left(  1+\left \Vert \lambda \right \Vert \varrho n\right)
\delta_{1}\gamma-2\delta_{1}\gamma \geq \frac{1}{4},\gamma \in \left(
0,\gamma^{\ast}\right)  .
\]
It then yields that%
\[
\dot{V}\left(  \eta_{t}\right)  \leq-\frac{1}{4}\gamma \eta^{\mathrm{T}}\left(
t\right)  P\eta \left(  t\right)  \leq-\frac{1}{4}\delta \gamma^{1+a}\left \Vert
\eta \left(  t\right)  \right \Vert ^{2},
\]
where we have used there exists a constant $\delta>0$ and an integer $a\geq1$
such that $P\geq \delta \gamma^{a}I_{n}$. Clearly, there exists a constant
$\delta_{4}>0$ such that $V\left(  \eta_{t}\right)  \leq \delta_{4}\left \Vert
\eta \left(  t\right)  \right \Vert ^{2}$. In the following, we will show that
there exists a constant $\delta_{3}>0$ such that $V\left(  \eta_{t}\right)
\geq \delta_{3}\left \Vert \eta \left(  t\right)  \right \Vert ^{2}$. Notice that%
\begin{align*}
  & -2\eta^{\mathrm{T}}\left(  t\right)  P\psi \left(  t\right) \leq \frac
{1}{2}\eta^{\mathrm{T}}\left(  t\right)  P\eta \left(  t\right)  +2\psi
^{\mathrm{T}}\left(  t\right)  P\psi \left(  t\right) \\
&  \leq2\delta_{1}\gamma^{2}\int_{t-h}^{t}\eta^{\mathrm{T}}\left(  s\right)
P\eta \left(  s\right)  \mathrm{d}s+\frac{1}{2}\eta^{\mathrm{T}}\left(
t\right)  P\eta \left(  t\right)  ,
\end{align*}
by which, we can obtain%
\begin{align*}
V_{1}\left(  z\left(  t\right)  \right)   &  \geq \eta^{\mathrm{T}}\left(
t\right)  P\eta \left(  t\right)  +2\eta^{\mathrm{T}}\left(  t\right)
P\psi \left(  t\right) \\
&  \geq \frac{1}{2}\eta^{\mathrm{T}}\left(  t\right)  P\eta \left(  t\right)
-2\delta_{1}\gamma^{2}\int_{t-h}^{t}\eta^{\mathrm{T}}\left(  s\right)
P\eta \left(  s\right)  \mathrm{d}s\\
&  =\frac{1}{2}\eta^{\mathrm{T}}\left(  t\right)  P\eta \left(  t\right)
-V_{3}\left(  \eta_{t}\right)  .
\end{align*}
Then, we can further get%
\begin{align*}
V\left(  \eta_{t}\right)   &  \geq V_{1}\left(  z\left(  t\right)  \right)
+V_{3}\left(  \eta_{t}\right) \\
&  \geq \frac{1}{2}\delta \gamma^{a}\left \Vert \eta \left(  t\right)  \right \Vert
^{2}.
\end{align*}
Therefore, the closed-loop system (\ref{delaysystem})
is asymptotically stable for $\gamma \in \left(  0,\gamma^{\ast}\right)  $.
%\textcolor{blue}{We mention that this proof is obtained with the help of \cite{zld12auto}.}
\end{proof}

{\bf A4. Proof of Theorem \ref{theorem2}}

Notice that the opinion dynamic network consisting of opinion dynamics model
(\ref{sys}) and the memoryless communication rule (\ref{u4}) is given by%
\begin{align}
\dot{x}_{i}\left(  t\right)  = & Ax_{i}\left(  t\right)  +\sigma%
\Lambda \digamma \chi_{i}\left(  t\right)  +(1-\sigma)\Lambda \digamma
\chi_{i}\left(  t-h\right)  \nonumber \\
=& Ax_{i}\left(  t\right)  +\sigma\Lambda \digamma \sum_{j\in
\mathcal{N}_{i}}l_{ij}x_{j}\left(  t\right)  +(1-\sigma)\nonumber \\
&\times \Lambda \digamma \sum_{j\in \mathcal{N}_{i}}l_{ij}x_{j}\left(  t-h\right)
,i \in \mathbf{I}[1,N],\label{eq4}%
\end{align}
where $\digamma=\varrho F$, in which $F=-\mathcal{A}^{\mathrm{T}}P\left(
\gamma \right)  $. Let $x=\left[  x_{1}^{\mathrm{T}},x_{2}^{\mathrm{T}}%
,\cdots,x_{N}^{\mathrm{T}}\right]  ^{\mathrm{T}}$, then (\ref{eq4}) can be expressed as the following compact form,
\begin{align}
\dot{x}\left(  t\right)  = &  \left(  I_{N}\otimes A\right)  x\left(
t\right)  +\left(  \sigma L\otimes \Lambda \digamma \right)  x\left(  t\right)
+\left(  (1-\sigma)L\otimes \Lambda \digamma \right)  x\left(  t-h\right)
\nonumber \\
= &  \left(  \left(  DU\right)  \otimes I_{n}\right)  \left(  I_{N}\otimes
A\right)  \left(  \left(  DU\right)  ^{-1}\otimes I_{n}\right)  x\left(
t\right)  +\left(  \left(  DU\right)  \otimes I_{n}\right)  \left(  \sigma
J_{\overline{L}}\otimes \Lambda \digamma \right)  \left(  \left(  DU\right)
^{-1}\otimes I_{n}\right)  x\left(  t\right)  \nonumber \\
&  +\left(  \left(  DU\right)  \otimes I_{n}\right)  \left(  (1-\sigma
)J_{\overline{L}}\otimes \Lambda \digamma \right)  \left(  \left(  DU\right)  ^{-1}\otimes I_{n}\right)  x\left(
t-h\right)  ,\label{eq5}%
\end{align}
where we have used Lemma \ref{lemma2}. Define a set of new variables
$\varsigma=\left(  \left(  DU\right)  ^{-1}\otimes I_{n}\right)  x=\left[
\varsigma_{1}^{\mathrm{T}},\varsigma_{2}^{\mathrm{T}},\cdots,\varsigma
_{N}^{\mathrm{T}}\right]  ^{\mathrm{T}}$, it follows from (\ref{j}) that (\ref{eq5}) is equivalent to $\dot{\varsigma}_{1}\left(
t\right)  =A\varsigma_{1}\left(  t\right)  $ and%
\begin{align*}
\dot{\varsigma}_{i}\left(  t\right)  =&A\varsigma_{i}\left(  t\right)
+\lambda_{i}\sigma\Lambda\digamma \varsigma_{i}\left(  t\right)  +\lambda
_{i}(1-\sigma)\Lambda\digamma \varsigma_{i}\left(  t-h\right) +\epsilon
_{i}\sigma\Lambda\digamma \varsigma_{i+1}\left(  t\right)+\epsilon_{i}%
(1-\sigma)\Lambda\digamma \varsigma_{i+1}\left(  t-h\right)  ,
\end{align*}
where $i\in \mathbf{I}[2,N]$ and $\varsigma_{N+1}\left(  t\right)  =0$. Since
$x=\left(  \left(  DU\right)  \otimes I_{n}\right)  \varsigma$, it yields from
the special structure of $U$ that the polarization is achieved if%
\begin{equation}
\lim_{t\rightarrow \infty}\left \Vert \varsigma_{i}\left(  t\right)  \right \Vert
=0,i\in \mathbf{I}[2,N], \label{eq7}%
\end{equation}
by which, we have%
\[
\lim_{t\rightarrow \infty}\left \Vert x_{i}\left(  t\right)  -d_i\varsigma
_{1}\left(  t\right)  \right \Vert =0,i\in \mathbf{I}[1,N].
\]
Clearly, (\ref{eq7}) is true if and only if%
\begin{equation}
\dot{\varsigma}_{i}\left(  t\right)  =A\varsigma_{i}\left(  t\right)
+\lambda_{i}\sigma\Lambda\digamma \varsigma_{i}\left(  t\right)  +\lambda
_{i}(1-\sigma)\Lambda\digamma \varsigma_{i}\left(  t-h\right)  , \label{eq6}%
\end{equation}
is asymptotically stable, $i\in\mathbf{I}[2,N]$. In what follows, we will show that system
(\ref{eq6}) is indeed asymptotically stable if $\varrho \geq \max
_{i\in \mathbf{I}[2,N]}\{1/\operatorname{Re}\{ \lambda_{i}\} \}$.
Notice that the stability of (\ref{eq6}) is equivalent to the stability
of the following system%
\[
\left \{
\begin{array}
[c]{rl}%
\dot{\eta}\left(  t\right)  = & A\eta \left(  t\right)  +\sigma \Lambda u\left(
t\right)  +(1-\sigma)\Lambda u\left(  t-h\right)  ,\\
u\left(  t\right)  = & \mathcal{F}\eta \left(  t\right)  ,
\end{array}
\right.
\]
where $\mathcal{F}=\lambda \varrho F$, $\varrho \geq1/\operatorname{Re}\{
\lambda \}$ and $\operatorname{Re}\{ \lambda \}>0$. The rest of proof is completed by Lemma \ref{lemma3}.

\batchmode
\frenchspacing
\bibliographystyle{amsplain}
\bibliography{LiuChai}
\batchmode
\end{document}